\titleformat*{\section}{\large\bfseries}
\titleformat*{\subsection}{\normalsize\bfseries}
\titleformat*{\subsubsection}{\normalsize\em}
\newtheorem{theorem}{Theorem}[section]
\newtheorem{lemma}[theorem]{Lemma}
\newtheorem{corollary}[theorem]{Corollary}
\theoremstyle{remark}
\newtheorem{definition}[theorem]{Definition}
\newtheorem{example}[theorem]{Example}
\newcommand{\NN}{\mathbb{N}}
\newcommand{\BB}{\mathbb{B}}
\newcommand{\nat}{\mathsf{N}}
\newcommand{\bool}{\mathsf{B}}
\newcommand{\comp}[2]{[{#1}]_{#2}}
\newcommand{\ZL}{\mathrm{ZL}}
\newcommand{\LEX}{\mathrm{LEX}}
\newcommand{\modset}{\mathcal{S}^\omega}
\newcommand{\modcont}{\mathcal{C}^\omega}
\newcommand{\modpar}{{\mathcal{P}}^\omega}
\newcommand{\cont}[3]{\{#1\}^{#2}_{#3}}
\newcommand{\conte}[2]{\langle{#1}\rangle_{#2}}
\newcommand{\contd}[3]{\langle{#1}\rangle^{#2}_{#3}}
\newcommand{\dt}[3]{|{#1}|^{#2}_{#3}}
\newcommand{\HAw}{\mathrm{HA}^\omega}
\newcommand{\PAw}{\mathrm{PA}^\omega}
\newcommand{\EPAw}{\mathrm{E\mbox{-}PA}^\omega}
\newcommand{\WHAw}{\mathrm{WE\mbox{-}HA}^\omega}
\newcommand{\WPAw}{\mathrm{WE\mbox{-}PA}^\omega}
\newcommand{\tail}{\mathrm{tail}}
\newcommand{\spec}{\eta}
\newcommand{\SR}{\mathrm{RP}}
\newcommand{\ub}[1]{\tilde{#1}}
\newcommand{\seq}[1]{[ {#1} ]}
\newcommand{\types}{\bf{T}}
\newcommand{\bneg}[1]{\overline{#1}}
\newcommand{\num}[1]{\underline{#1}}
\newcommand{\QFAC}{\mathrm{QF\mbox{-}AC}}
\newcommand{\at}{\; @ \;}
\newcommand{\exts}[2]{\overline{{#1},{#2}}}
\title{On the computational content of Zorn's lemma}
\author{Thomas Powell}
\date{}
\begin{document}

\maketitle

\begin{abstract}
We give a computational interpretation to an abstract instance of Zorn's lemma formulated as a wellfoundedness principle in the language of arithmetic in all finite types. This is achieved through G\"odel's functional interpretation, and requires the introduction of a novel form of recursion over non-wellfounded partial orders whose existence in the model of total continuous functionals is proven using domain theoretic techniques. We show that a realizer for the functional interpretation of open induction over the lexicographic ordering on sequences follows as a simple application of our main results.\medskip

\noindent\textbf{Keywords.} Zorn's lemma, G\"{o}del's functional interpretation, domain theory, continuous functionals, higher-order computability
\end{abstract}

%%%%%%%%%%%%%%%%%%%%%%%%%%%%%%%%%%%%%%%%%%%%%%%%%
%%%%%%%%%%%%%%%%%%%%%%%%%%%%%%%%%%%%%%%%%%%%%%%%%
\section{Introduction}
%%%%%%%%%%%%%%%%%%%%%%%%%%%%%%%%%%%%%%%%%%%%%%%%%
%%%%%%%%%%%%%%%%%%%%%%%%%%%%%%%%%%%%%%%%%%%%%%%%%
\label{sec-intro}

The correspondence between proofs and programs is one of the most fundamental ideas in computer science. Initially connecting intuitionistic logic with the typed lambda calculus, it has since been extended to incorporate a wide range of theories and programming languages. 

A challenging problem in this area is to give a computational interpretation to the axiom of choice in the setting of classical logic. A number of ingenious solutions have been proposed, ranging from Spector's fundamental consistency proof of classical analysis using bar recursion \cite{Spector(1962.0)} to more modern approaches, which include the \emph{Berardi-Bezem-Coquand} functional \cite{BBC(1998.0)}, optimal strategies in sequential games \cite{EscOli(2011.0)}, and Krivine's `quote' and `clock' \cite{Krivine(2003.0)}.

In this paper, we introduce both a new form of recursion and a new computational interpretation of a choice axiom. In contrast to the aforementioned works, which all focus on variants of countable choice, we give a direct computational interpretation to an axiomatic formulation of Zorn's lemma. Our work is closest in spirit to Berger's realizability interpretation of open induction on the lexicographic ordering via open recursion \cite{Berger(2004.0)} - an idea which was later transferred to the setting of G\"{o}del's functional interpretation in \cite{Powell(2018.0)}. However, a crucial difference here is that we do not work with a concrete order, but a general parametrised variant of Zorn's lemma, from which induction on the lexicographic ordering can be considered a special case.

After formulating an axiomatic version of Zorn's lemma in the language of Peano arithmetic in all finite types, we study related forms of recursion on chain bounded partial orders. In particular, we introduce a new recursive scheme based on the notion of a `truncation', and give precise domain theoretic conditions under which the resulting fixpoint in the partial continuous functionals is total (Theorem \ref{thm-totcont}).

We then demonstrate that we can use our new form of recursion to solve the functional interpretation of our variant of Zorn's lemma. Our approach completely separates the issues of correctness (that our program does what it's supposed to do) with that of totality (that our program is well-defined). The main correctness result (Theorem \ref{thm-mainver}) is extremely general, and its proof short and direct, suggesting that our realizing terms are natural in a fundamental way. To establish totality we make use of our earlier domain theoretic results, and again provide conditions which ensure that our computational interpretation is satisfied in the continuous functionals. We conclude with a concrete example which ties everything together, demonstrating that the functional interpretation of open induction over the lexicographic ordering can be given as a special case of our general result. %This is of additional interest as it demonstrates that Spector's stopping condition, which is fundamental to most known functional interpretations of choice principles, can be viewed as a simple kind of truncation.

This work aims to achieve several things. Our new recursive schemes on chain bounded partial orders form a contribution to higher-order computability theory, which we believe is of interest in its own right. The subsequent computational interpretation of Zorn's lemma is a new result in proof theory, which we hope will lead to novel applications in future work. Finally, through our general and abstract setting we provide some fresh insights into known computational interpretations of variants of the axiom of choice, particularly open recursion \cite{Berger(2004.0)} and Spector's original bar recursion \cite{Spector(1962.0)}.

%%%%%%%%%%%%%%%%%%%%%%%%%%%%%%%%%%%%%%%%%%%%%%%%%
%%%%%%%%%%%%%%%%%%%%%%%%%%%%%%%%%%%%%%%%%%%%%%%%%
\section{Preliminaries}
%%%%%%%%%%%%%%%%%%%%%%%%%%%%%%%%%%%%%%%%%%%%%%%%%
%%%%%%%%%%%%%%%%%%%%%%%%%%%%%%%%%%%%%%%%%%%%%%%%%
\label{sec-prelim}

We begin by presenting some essential background material. G\"{o}del's functional interpretation, which only appears from Section \ref{sec-dial} onwards, will be introduced later.

%%%%%%%%%%%%%%%%%%%%%%%%%%%%%%%%%%%%%%%%%%%%%%%%%
\subsection{Zorn's lemma}
%%%%%%%%%%%%%%%%%%%%%%%%%%%%%%%%%%%%%%%%%%%%%%%%%
\label{sec-prelim-zorn}

Zorn's lemma is central to this article, and features not only as a proof technique but also in the guise of an axiomatic principle. In what follows, $<$ will always denote a strict partial order, and $\leq$ its reflexive closure.
\begin{definition}
\label{def-cc}
We call a partially ordered set $(S,<)$ \emph{chain bounded} if every nonempty chain $\gamma\subseteq S$ (i.e. nonempty totally ordered subset of $S$) has an upper bound in $S$, that is an element $u\in S$ such that $x\leq u$ for all $x\in\gamma$.
\end{definition}
\begin{theorem}[Zorn's lemma]
\label{thm-zorn}
Let $(S,<)$ be a nonempty partially ordered set which is chain bounded. Then $S$ contains at least one maximal element, that is an element $x\in S$ such that $\neg(x<y)$ for all $y\in S$.
\end{theorem}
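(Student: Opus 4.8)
The plan is to deduce Theorem~\ref{thm-zorn} from the axiom of choice by a maximality argument of Bourbaki--Witt type. Suppose for contradiction that $(S,<)$ is nonempty and chain bounded but has no maximal element. Then $\{y\in S : x<y\}$ is nonempty for every $x\in S$, so the axiom of choice yields a function $f:S\to S$ with $x<f(x)$ for all $x$; likewise, chain boundedness together with choice yields a function $\beta$ sending each nonempty chain $\gamma\subseteq S$ to an upper bound $\beta(\gamma)\in S$. Fix any $a_0\in S$ and set $h(\emptyset):=a_0$ and $h(\gamma):=f(\beta(\gamma))$ for each nonempty chain $\gamma$; then $h(\gamma)$ is a \emph{strict} upper bound of $\gamma$, i.e.\ $x<h(\gamma)$ for every $x\in\gamma$. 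The goal is to manufacture a chain $\gamma$ with $h(\gamma)\in\gamma$, which is absurd by irreflexivity of $<$.

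Call $C\subseteq S$ a \emph{tower} if $C$ is well-ordered by $<$ and, writing $C_{<x}:=\{y\in C : y<x\}$, we have $x=h(C_{<x})$ for every $x\in C$; thus a tower is precisely an initial segment of the transfinite iteration of $h$, and $\{a_0\}$ is a tower, so towers exist. The heart of the argument --- and the step I expect to be the main obstacle --- is the \emph{comparability lemma}: of any two towers $C,D$, one is an initial segment of the other. I would prove this by assuming it fails, taking the $<$-least point at which the two induced well-orders disagree, and unwinding the defining equation $x=h(\,\cdot\,)$ on each side; this is essentially the uniqueness half of the transfinite recursion theorem, and the delicacy lies in the bookkeeping with initial segments rather than in any individual step.

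Granting comparability, let $C^\ast$ be the union of all towers. One checks that $C^\ast$ is again a tower: any nonempty $A\subseteq C^\ast$ meets some tower $C$, and the $<$-least element of $A\cap C$ is least in all of $A$, because every other tower either is an initial segment of $C$ or has $C$ as an initial segment; and for $x\in C^\ast$, choosing a tower $C\ni x$, comparability gives $C^\ast_{<x}=C_{<x}$, so the equation $x=h(C^\ast_{<x})$ holds. In particular $C^\ast$ is a nonempty chain, so $m:=h(C^\ast)$ is defined and $x<m$ for all $x\in C^\ast$, whence $m\notin C^\ast$. But then $C^\ast\cup\{m\}$ is well-ordered by $<$ and satisfies the tower equation at every point --- at $x\in C^\ast$ because $(C^\ast\cup\{m\})_{<x}=C^\ast_{<x}$, and at $x=m$ because $(C^\ast\cup\{m\})_{<m}=C^\ast$ and $m=h(C^\ast)$. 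Thus $C^\ast\cup\{m\}$ is a tower not contained in $C^\ast$, contradicting that $C^\ast$ contains every tower; this contradiction proves $S$ has a maximal element. (Alternatively one can bypass the tower bookkeeping using Hartogs' lemma: transfinitely iterating $h$ along an ordinal that admits no injection into $S$ yields a strictly increasing, hence injective, map into $S$ --- the same absurdity.)
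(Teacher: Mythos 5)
Your argument is correct, but there is nothing in the paper to compare it against: Theorem~\ref{thm-zorn} is stated as classical background and never proved --- it is simply invoked as a proof technique later on (e.g.\ in Theorems~\ref{thm-open}, \ref{thm-simpletot} and \ref{thm-totcont}). What you have written is the standard Bourbaki--Witt-style derivation of Zorn's lemma from the axiom of choice: choice gives a successor function $f$ with $x<f(x)$ (using the assumed absence of maximal elements) and a choice of upper bounds $\beta$ on nonempty chains, and the tower/comparability argument shows that the union $C^\ast$ of all towers would have to absorb its own strict upper bound $h(C^\ast)$, which is absurd. The details you spell out are sound; in particular, treating the empty chain separately via $a_0$ is exactly right, since the paper's Definition~\ref{def-cc} only demands bounds for nonempty chains, and your verification that $C^\ast\cup\{m\}$ satisfies the tower equation at both old points and at $m$ is correct. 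The one step you leave as a sketch, the comparability lemma for towers, is the genuinely delicate part, and the cleanest way to discharge it is not literally ``the least point of disagreement'' but to take $W$ to be the union of all common initial segments of the two towers $C,D$: if $W$ equals neither, then $W=C_{<c}=D_{<d}$ for the least elements $c,d$ outside $W$, whence $c=h(W)=d$ and $W\cup\{c\}$ is a larger common initial segment, a contradiction. This is the uniqueness half of transfinite recursion, as you say, and with it your proof is complete; the Hartogs'-lemma shortcut you mention is an equally valid alternative.
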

The following well-known application of Zorn's lemma will form a running illustration throughout the paper:
\begin{example}
\label{ex-ideal}
Let $R$ be some nontrivial ring with unity, and define $(S,\subset)$ to be the set of all proper ideals of $R$ partially ordered by the strict subset relation. Then $S$ is nonempty since $\{0\}\in S$, and is also chain bounded since for any nonempty chain $\gamma$, the set $\bigcup_{x\in\gamma} x$ is also a proper ideal of $R$ and thus an element of $S$. Therefore by Zorn's lemma, $S$ has a maximal element, or in other words, $R$ has a maximal ideal.
\end{example}
Our ability to apply Zorn's lemma to establish the existence of maximal ideals relies crucially on the fact that the upper bound $\bigcup_{x\in\gamma} x$ is also a proper ideal. This in turn is due to the fact that $x$ being a proper ideal is a `piecewise' property, in that it can be reduced to an infinite conjunction ranging over finite pieces of information about $x$. We now make this intuition precise, leading to a modification of Zorn's lemma (Theorem \ref{thm-open}) close in spirit to \emph{open induction} as studied by Raoult \cite{Raoult(1988.0)}. This will form the basis of our syntactic version of Zorn's lemma presented in Section \ref{sec-syntactic}.

%We now give a modification of Zorn's lemma (Theorem \ref{thm-open}) which makes this intuition formal, and inspires our syntactic formulation of Zorn's lemma presented in Section \ref{sec-syntactic}.

\begin{definition}
An approximation function on the set $X$ relative to some sets $D$ and $U$ is taken to be a mapping $\comp{\cdot}{(\cdot)}:X\times D\to U$, where the sets $D$ and $U$ will play the following intuitive roles:
\begin{itemize}

\item $D$ is an index set of `sizes',

\item $U$ is a set of `approximations' of elements of $X$.

\end{itemize}
We call $\comp{x}{d}\in U$ the approximation of $x$ of size $d$.
\end{definition}
%
%For the remainder of this section, we assume that each chain complete partial order $(X,<)$ comes equipped with some canonical choice of upper bound i.e. to each chain $\gamma$ we assign some fixed upper bound $\ub{\gamma}$. 
%
\begin{definition}
\label{def-ccapp}
We say that $(X,<)$ is chain bounded with respect to the approximation function $\comp{\cdot}{}:X\times D\to U$ if any nonempty chain $\gamma\subseteq X$ has an upper bound $\ub{\gamma}\in X$ satisfying the additional property that for all $d\in D$ there is some $x\in \gamma$ such that $\comp{\ub{\gamma}}{d}=\comp{x}{d}$.
\end{definition}
\begin{example}
\label{ex-ccapp}
Let $(2^R,\subset)$ be the powerset of some set $R$, and $D$ the set of all \emph{finite} subsets of $R$. Let
\begin{equation*}
U:=\{f:d\to \{0,1\}\; | \; d\in D\}
\end{equation*}
and define $\comp{x}{d}:d\to\{0,1\}$ by
\begin{equation*}
\comp{x}{d}(a)=1\Leftrightarrow a\in x.
\end{equation*}
Then $(2^R,\subset)$ is chain bounded with respect to $\comp{\cdot}{}$. To see this, given a chain $\gamma$ let $\ub{\gamma}:=\bigcup_{x\in\gamma} x$ and suppose that $a\in \ub{\gamma}$. Then there must be some $x_a\in\gamma$ such that $a\in x_a$. For $d\in D$ define $x:=\max_{\subset}\{x_a\; | \; a\in d\cap\ub{\gamma}\}\in\gamma$, and note that $x$ is well defined since $\gamma$ is totally ordered. Now, if $\comp{\ub{\gamma}}{d}(a)=1$ then $a\in d\cap\ub{\gamma}$ and thus $a\in x$, and so $\comp{x}{d}(a)=1$. On the other hand, if $\comp{\ub{\gamma}}{d}=0$ then $a\notin \ub{\gamma}$ and so $a\notin x$ (since $a\in x$ trivially implies $a\in\ub{\gamma}$, hence $\comp{x}{d}(a)=0$. Therefore $\comp{\ub{\gamma}}{d}=\comp{x}{d}$.
\end{example}
\begin{definition}
\label{def-piece}
We call a predicate $P(x)$ on $X$ \emph{piecewise} with respect to the approximation function $\comp{\cdot}{}:X\times D\to U$ if $P(x)\Leftrightarrow (\forall d\in D) Q(\comp{x}{d})$ for some predicate $Q(u)$ on $U$.
\end{definition}
\begin{theorem}
\label{thm-open}
Let $(X,<)$ be a partially ordered set which is chain bounded w.r.t. the approximation function $\comp{\cdot}{}:X\times D\to U$, and $P(x)$ a predicate on $X$ which is piecewise w.r.t the same function. Then whenever $P(x)$ holds for some $x\in X$, there exists $y\in X$ such that $P(y)$ holds but $\neg P(z)$ whenever $y<z$.
\end{theorem}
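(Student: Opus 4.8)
The plan is to deduce Theorem \ref{thm-open} directly from Zorn's lemma (Theorem \ref{thm-zorn}) applied to the sub-poset $S := \{x \in X \mid P(x)\}$ equipped with the restriction of $<$. Since $P(x)$ holds for some $x \in X$ by hypothesis, $S$ is nonempty, so the only real work is to verify that $(S,<)$ is chain bounded in the sense of Definition \ref{def-cc}; a maximal element $y$ of $S$ then satisfies $P(y)$, and maximality says exactly that there is no $z$ with $P(z)$ and $y < z$, i.e. $\neg P(z)$ whenever $y < z$, which is the desired conclusion.

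To check chain boundedness of $S$, let $\gamma \subseteq S$ be a nonempty chain. Then $\gamma$ is also a nonempty chain in $X$, so by chain boundedness of $(X,<)$ with respect to $\comp{\cdot}{}$ (Definition \ref{def-ccapp}) there is an upper bound $\ub{\gamma} \in X$ such that for every $d \in D$ there exists $x \in \gamma$ with $\comp{\ub{\gamma}}{d} = \comp{x}{d}$. It remains to see that $\ub{\gamma}$ actually lies in $S$, i.e. that $P(\ub{\gamma})$ holds; since $\ub{\gamma}$ is already an upper bound for $\gamma$ in $X$ (hence a fortiori in $S$ once membership is established), this suffices. Here is where the piecewise hypothesis on $P$ enters: writing $P(x) \Leftrightarrow (\forall d \in D)\, Q(\comp{x}{d})$ for the witnessing predicate $Q$ from Definition \ref{def-piece}, fix an arbitrary $d \in D$, pick $x \in \gamma$ with $\comp{\ub{\gamma}}{d} = \comp{x}{d}$, and note that $x \in \gamma \subseteq S$ gives $P(x)$, hence $Q(\comp{x}{d})$, hence $Q(\comp{\ub{\gamma}}{d})$. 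As $d$ was arbitrary, $(\forall d \in D)\, Q(\comp{\ub{\gamma}}{d})$, i.e. $P(\ub{\gamma})$, so $\ub{\gamma} \in S$.

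Having shown $(S,<)$ is a nonempty chain bounded partial order, Zorn's lemma yields a maximal element $y \in S$, and unwinding the definitions gives the statement. The conceptual heart of the argument — and the step worth highlighting rather than the routine Zorn invocation — is the interplay in the previous paragraph between the approximation-function refinement of chain boundedness and the piecewise structure of $P$: it is precisely the matching condition $\comp{\ub{\gamma}}{d} = \comp{x}{d}$ for each size $d$ that allows the "finite pieces" of the witness $Q$ at the bound to be inherited from members of the chain, which would fail for an arbitrary upper bound. This mirrors the situation in Example \ref{ex-ideal}, where being a proper ideal is exactly such a piecewise property and the union of a chain of proper ideals is again proper. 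No genuine obstacle is anticipated beyond making sure the nonemptiness and restriction-of-order bookkeeping is handled cleanly.
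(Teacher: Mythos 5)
Your proposal is correct and follows essentially the same route as the paper's proof: form $S:=\{x\in X\mid P(x)\}$, use the approximation-matching property of the upper bound $\ub{\gamma}$ together with the piecewise structure of $P$ to show $\ub{\gamma}\in S$, and then apply Zorn's lemma to obtain a maximal element of $S$. No gaps.
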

\begin{proof}
Let $S:=\{x\in X\; | \; P(x)\}$, and take some nonempty chain $\gamma\subseteq S$. Our first step is to show that $\ub{\gamma}\in S$, from which it follows that $(S,<)$ is chain bounded. Since $P(x)\Leftrightarrow (\forall d\in D) Q(\comp{x}{d})$ for some predicate $Q(u)$, it suffices to show that $Q(\comp{\ub{\gamma}}{d})$ for all $d\in D$. But using that for any $d$ there exists some $x\in\gamma\subseteq S$ with $\comp{\ub{\gamma}}{d}=\comp{x}{d}$ we're done, since $Q(\comp{x}{d})$ follows from $P(x)$. Now, suppose that $P(x)$ holds for some $x\in X$, and thus $S$ is nonempty. By Zorn's lemma, $S$ contains a maximal element $y$. We clearly have $P(y)$, and if $y<z$ then $z\notin S$ and thus $\neg P(z)$.
\end{proof}
\begin{example}
\label{ex-open}
Let $(2^R,\subset)$ be the powerset of some nontrivial ring $R$, with $\comp{x}{d}$ defined as in Example \ref{ex-ccapp}, and let $P(x)$ be denote the predicate `$x$ is a proper ideal of $R$'. Then this is a piecewise predicate w.r.t. $\comp{\cdot}{}$, since each condition of being a proper ideal can be formulated in a piecewise way. For instance, $0\in x$ is equivalent to 
\begin{equation*}
\forall d\; (0\in d\Rightarrow \comp{x}{d}(0)=1)
\end{equation*}
and analogously for $1\notin x$. Similarly, closure of $x$ under addition can be formulated in a piecewise way as
\begin{equation*}
\begin{aligned}
\forall d,r,r'\; (\{r,r',r+r'\}\subseteq d& \wedge \comp{x}{d}(r)=\comp{x}{d}(r')=1\\
&\Rightarrow \comp{x}{d}(r+r')=1)
\end{aligned}
\end{equation*}
and analogously for closure under left and right sided multiplication. Therefore the existence of a maximal ideal also follows from Theorem \ref{thm-open} above. Note that since in $r,r'$ above are always elements of the finite set $d$, $\forall r,r'$ can be treated as a bounded quantifier, and so `$x$ is a proper ideal of $R$' is piecewise even with respect to some quantifier-free $Q(u)$.
\end{example}
%
%The formulation Zorn's lemma as a minimal principle on some piecewise predicate $P(x)$ is close in spirit to the principle of \emph{open induction} as studied by Raoult in \cite{Raoult(1988.0)}, and forms the basis of our syntactic version of Zorn's lemma as an axiomatic extension of formal arithmetic in Section \ref{sec-syntactic}. 

%%%%%%%%%%%%%%%%%%%%%%%%%%%%%%%%%%%%%%%%%%%%%%%%%
\subsection{Formal theories of arithmetic}
%%%%%%%%%%%%%%%%%%%%%%%%%%%%%%%%%%%%%%%%%%%%%%%%%
\label{sec-prelim-formal}

In the remainder of this article, our definitions and results typically take place in one of the following settings:
\begin{itemize}

\item Within a formal theory of arithmetic in higher-types (syntactic);

\item Within a type structure of continuous functionals, either the total or partial (semantic).

\end{itemize}
We now outline both of these settings in turn. Our basic formal system will be the standard theories of Peano (resp. Heyting) arithmetic in all finite types $\PAw$ ($\HAw$). For us, the finite types $\types$ will be generated by the following grammar:
\begin{equation*}
\rho,\tau::=\bool\; | \; \nat \; | \; \rho\times\tau \; | \; \rho^\ast\; | \; \rho\to\tau
\end{equation*}
These represent base types for booleans $\bool$ and natural numbers $\nat$, and in addition to the usual function type $\rho\to\tau$ include cartesian products $\rho\times\tau$ and finite sequence types $\rho^\ast$ as primitives. Note that alternatively, we could work over a minimal type structure $\nat\; | \; \rho\to\tau$ and code up products and finite sequences as derived constructions.

For full definitions of $\PAw$ resp. $\HAw$ the reader is directed to e.g. \cite{AvFef(1998.0),Kohlenbach(2008.0),Troelstra(1973.0)}, bearing in mind that officially we would need to extend the canonical theories presented there with additional constants and axioms for dealing with cartesian products and list operations, which is nevertheless entirely standard (for details see e.g. \cite[Chapter I.8]{Troelstra(1973.0)} and \cite{BergBrisSaf(2012.0)}).

Terms of $\PAw$ resp. $\HAw$ are those of G\"{o}del's System T (with product and sequence types). We denote by $0_\rho:\rho$ a canonical zero object of type $\rho$. Formulas of $\PAw$ (resp. $\HAw$) include atomic formulas $=_\bool$ and $=_\nat$ for equality at base types, and are built using the usual logical connectives, together with quantifiers for each type. Axioms and rules include those of full classical (resp. intuitionistic) logic, non-logical axioms for the constants symbols together with equality axioms and the axiom of induction. Equality at higher types is defined inductively e.g. $f=_{\rho\to\tau} g:=\forall x^\rho (fx=gx)$, and we include axioms for extensionality, so that our formulation of $\PAw$ corresponds to the fully extensional $\EPAw$ of \cite{Kohlenbach(2008.0)}. 
%Note that in Section \ref{sec-dial} we will also need to consider the so-called \emph{weakly-extensional} variants $\WPAw$ (resp. $\WHAw$) of these theories, where for technical reasons the extensionality axioms are replaced by weak rule forms, and issue that will be discussed fully later.

The canonical models for $\PAw$ include the type structures of all set-theoretic functionals $\modset$ together with total continuous functional $\modcont$. However, the majority of recursive schemes which have been used to interpret the axiom of choice (including essentially all known variants of bar recursion) are no longer satisfiable in $\modset$, and instead have $\modcont$ as their canonical model. In the remainder of this section, we outline some key facts about this model.

%%%%%%%%%%%%%%%%%%%%%%%%%%%%%%%%%%%%%%%%%%%%%%%%%
\subsection{The continuous functionals in all finite types}
%%%%%%%%%%%%%%%%%%%%%%%%%%%%%%%%%%%%%%%%%%%%%%%%%
\label{sec-prelim-cont}

In one sentence, the type structure $\modcont$ of continuous functionals consists of functionals which only require a finite piece of information about their input to compute a finite piece of information about their output. Over the years, they have turned out to form an elegant and robust class of functionals, and in particular are the standard model for \emph{bar recursive} extensions of the primitive recursive functionals.

There are various ways of characterising the continuous functionals, dating back to Kleene \cite{Kleene(1959.0)} (whose construction was based on associates) and Kreisel \cite{Kreisel(1959.0)} (who instead used formal neighbourhoods). However, here we follow the domain theoretic approach of Ershov \cite{Ershov(1977.0)}, who demonstrated that the continuous functionals can be constructed as the extensional collapse of the total objects in the type structure $\modpar$ of \emph{partial} continuous functionals. This in particular provides us with a simple method for showing that our new recursive schemes are satisfied in $\modcont$, namely proving that the corresponding fixpoints in $\modpar$ represent total objects. For accomprehensive account of all this, the reader is encouraged to consult \cite{Normann(1999.0)} or the recent book \cite{LongNor(2015.0)}. Here we provide no more than a brief overview of the relevant theory.

For each finite type $\sigma$, we define the domain $P_\sigma$ of partial continuous functionals of that type as follows: $P_\bool:=\BB_\bot$ and $P_\nat:=\NN_\bot$ where $\BB_\bot$ resp. $\NN_\bot$ are the usual flat domains of booleans and natural numbers, $P_{\rho\times\tau}:=P_\rho\times P_\tau$, $P_{\sigma^\ast}:=\{\seq{x_0,\ldots,x_{n-1}}\; | \; n\in\NN\mbox{ and }x_i\in P_\sigma\}\cup\{\bot\}$ and finally $P_{\rho\to\tau}:=[P_\rho\to P_\tau]$ where $[D\to E]$ denotes the domain of all functions between $X$ and $Y$ which are continuous in the \emph{domain theoretic} sense (i.e. are monotone and preserve lubs of chains). We write $\modpar:=\{P_\sigma\}_{\sigma\in\types}$ for this type structure of partial continuous functionals.

For each type $\sigma$, we define the set $T_\sigma\subset P_\sigma$ of total objects in the usual way as $T_\bool:=\BB$ and $T_\nat:=\NN$, $T_{\rho\times\tau}:=T_\rho\times T_\tau$, $T_{\sigma^\ast}:=\{\seq{x_0,\ldots,x_{n-1}}\; | \; n\in\NN\mbox{ and }x_i\in T_\sigma\}$ and finally $T_{\rho\to\tau}:=\{f\in P_{\rho\to\tau}\; : \; \forall x(x\in T_\sigma \Rightarrow fx\in T_\tau)\}$. Furthermore, we define an equivalence relation $\approx_\sigma$ on $T_\sigma$ to equate total objects that agree on total inputs: $x\approx_\bool y$ iff $x=y$ and similarly for $\approx_\nat$, $(x,x')\approx_{\rho\times\tau} (y,y')$ iff $x\approx_\rho y$ and $y\approx_\tau y'$, $\seq{x_0,\ldots,x_{n-1}}\approx_{\sigma^\ast} \seq{y_0,\ldots,y_{m-1}}$ iff $n=m$ and $x_i\approx_{\sigma} y_i$ for all $i<n$ and finally $f\approx_{\rho\to\tau} g$ iff $fx\approx_\tau gx$ for all $x\in T_\rho$.

It turns out that all total objects are hereditarily extensional, in the sense that if $f\in T_{\rho\to\tau}$ and $x\approx_\rho y$ then $fx\approx_\tau fy$, and therefore the extensional collapse $C_\sigma:=T_\sigma/\approx_\sigma$ of the total objects constitutes a hierarchy $\modcont:=\{C_\sigma\}_{\sigma\in\types}$ of functionals in its own right. We call this hierarchy the \emph{total} continuous functionals, and as shown by Ershov, $\modcont$ is in fact isomorphic to the constructions of Kleene and Kreisel.

It is well known that $\modcont$ is a model of $\PAw$, and so in particular, any closed term $e:\sigma$ of System T has a canonical interpretation $e_C\in C_\sigma$, which can in turn be represented by some element $e_P\in T_\sigma$ of the corresponding equivalence class in $\modpar$. Suppose now that we extend System T with some new constant symbol $\Phi:\sigma$ which satisfies a recursive defining axiom
\begin{equation*}
(\ast) \ \ \ \Phi(x_1,\ldots,x_n)= r(\Phi,x_1,\ldots,x_n)
\end{equation*}
where $r$ is a closed term of System T. We can equivalently express $(\ast)$ as $\Phi=e(\Phi)$ for $e:\sigma\to\sigma$ defined by $$e(f):=\lambda x_1,\ldots ,x_n.r(f,x_1,\ldots,x_n).$$ Now, since $e$ is primitive recursive, it has a total representation $e_P\in T_{\sigma\to\sigma}\subset [P_\sigma\to P_\sigma]$, and it is a basic fact of domain theory that $\Phi$ can be given an interpretation $\Phi_P$ in $\modpar$ as a least fixed point of $e_P$ i.e.
\begin{equation*}
\Phi_P:=\bigsqcup_{n\in\NN} e_P^n(\bot_\sigma)
\end{equation*}
satisfies $\Phi_P=e_P(\Phi_P)$. If we can now show that $\Phi_P$ is in fact total, in other words that $\Phi_P(x_1,\ldots,x_n)$ is total for all total inputs $x_1,\ldots,x_n$, then defining $\Phi_C:=[\Phi_P]_{\approx_{\sigma}}\in C_\sigma$ we have
\begin{equation*}
\Phi_C=[\Phi_P]_{\approx_\sigma}=[e_P(\Phi_P)]_{\approx_\sigma}=[e_P]_{\approx_{\sigma\to\sigma}}[\Phi_P]_{\approx_\sigma}=e_C(\Phi_C)
\end{equation*}
and therefore the object $\Phi_C$ satisfies the defining axiom $(\ast)$ in $\modcont$. In other words, $\modcont$ is a model of the theory $\PAw+\Phi$, where by the latter we mean the extension of $\PAw$ with the new constant $\Phi$ and axiom $(\ast)$.

In short, in order to show that the extension of System T with some new form of recursion $\Phi$ is satisfied in $\modcont$, it suffices to show that the natural interpretation of $\Phi$ as a fixpoint in $\modpar$ is total. This approach has been widely used in the past to show that various forms of strong recursion arising from the axiom of choice have $\modcont$ as a model (see e.g. \cite[Proposition 5.1]{Berger(2004.0)} or \cite[Theorem 1]{BergOli(2005.0)}), and will be fundamental for us as well in Section \ref{sec-rec}.

In addition to showing that extensions of System T have a model, we must also confirm that they represent \emph{programs}, in the sense that any object of type $\nat$ can be effectively reduced to a numeral. This follows by appealing to Plotkin's adequacy theorem \cite{Plotkin(1977.0)}: We observe that terms of System T plus our new recursor $\Phi$ can be viewed as terms in PCF (recursion being dealt with by using the fixpoint combinator), which in addition inherit the usual call-by-value reduction semantics, with the defining axiom $(\ast)$ being interpreted as a rewrite rule. By showing that $\Phi$ represents a total object in the semantics of PCF within $\modpar$, it follows that any closed term $e:\nat$ in our extended calculus is denoted by some natural number i.e. $[e]\in \NN$, and by the adequacy theorem $e$ must then reduce to the numeral $\num{n}$. \smallskip

\noindent\textbf{Remark.} In order to avoid burdening ourselves with too many subscripts, in the remainder of this paper we use the same notation for $e:\sigma$ in $\PAw$, its canonical interpretation $e\in C_\sigma$ and some suitable representation $e\in P_\sigma$, rather than laboriously writing $e_C$ resp. $e_P$ whenever we are working in continuous models. Where there is any ambiguity, we make absolutely clear which system we are working in, and in the case of $e_P$ for primitive recursive $e$ we write explicitly how $e$ can be represented as a partial object unless this is obvious.
%
%%%%%%%%%%%%%%%%%%%%%%%%%%%%%%%%%%%%%%%%%%%%%%%%%
%%%%%%%%%%%%%%%%%%%%%%%%%%%%%%%%%%%%%%%%%%%%%%%%%
\section{A syntactic formulation of Zorn's lemma}
%%%%%%%%%%%%%%%%%%%%%%%%%%%%%%%%%%%%%%%%%%%%%%%%%
%%%%%%%%%%%%%%%%%%%%%%%%%%%%%%%%%%%%%%%%%%%%%%%%%
\label{sec-syntactic}

In this short section, we present a general axiomatic formulation of Zorn's lemma. This will be based on Theorem \ref{thm-open}, and is close in spirit to the axiom of \emph{open induction} as studied in \cite{Berger(2004.0)}. Like open induction, our axiom is of course weaker than the full statement of Zorn's lemma. Nevertheless, as we will see in Section \ref{sec-lex}, it in fact generalises open induction, and so in particular can be used to formalize highly non-trivial proofs such as Nash-Williams' minimal bad-sequence construction (cf. \cite{Berger(2004.0),Powell(2018.0)}). To be more specific, our axiom schema will take the shape of a maximum principle of the form
\begin{equation*}
\exists x P(x)\to \exists y (P(y)\wedge \forall z>y\neg P(z))
\end{equation*}
where $P(x)$ will range over formulas which are piecewise in the sense of Definition \ref{def-piece} and $<$ denotes some chain bounded partial order. However, our precise formulation of the axiom will be within the language of $\PAw$, and therefore both the notion of a piecewise formula and the relation $<$ need to be represented in a suitable way. \smallskip

\noindent\textbf{Remark.} From now all we annotate important definitions and results with the theory or model in which they take place, which will usually be some extension of $\PAw$ resp. $\HAw$ or one of $\modcont$ or $\modpar$.
\begin{definition}[$\PAw$/$\HAw$]
\label{def-papiece}
Suppose that $\comp{\cdot}{(\cdot)}:\sigma\times\delta\to\nu$ is a closed term of System T, and $Q(u^\nu)$ is a formula in the language of $\PAw$/$\HAw$. Then we say that the formula $P(x^\sigma):\equiv \forall d^\delta\; Q(\comp{x}{d})$ is piecewise w.r.t. $\comp{\cdot}{}$.
\end{definition}
Now, while it is too restrictive to demand that $<$ be represented by some primitive recursive functional $\sigma\times\sigma\to\bool$, for all applications we are interested in it suffices that $<$ can be expressed as a $\Sigma_1$ formula as follows:
\begin{equation*}
x<y:\equiv \exists a^\rho(y=_\sigma x\oplus a\wedge x\prec a)
\end{equation*}
where now $\oplus:\sigma\times\rho\to\sigma$ and $\prec:\sigma\times\rho\to\bool$ are closed terms of System T for some type $\rho$ (we use $x\oplus a$ to denote $\oplus(x,a)$ and $x\prec a$ to denote $\prec(x,a)=1$, and similarly $a\succ x$ to denote $x\prec a$). 
%This motivates our axiomatic formulation of Zorn's lemma, which is given as follows:
%
\begin{definition}[$\PAw$/$\HAw$]
\label{def-paopen}
Let $\comp{\cdot}{(\cdot)}:\sigma\times\delta\to\nu$, $\oplus:\sigma\times\rho\to\sigma$ and $\prec:\sigma\times\rho\to\bool$ be closed terms of System T. The axiom schema $\ZL_{\comp{}{},\oplus,\prec}$ is given by
\begin{equation*}
\begin{aligned}
\exists x^\sigma\forall d^\delta Q(\comp{x}{d})\to &\exists y^\sigma(\forall d\; Q(\comp{y}{d})\\
&\wedge \forall a\succ y\;\exists d\; \neg Q(\comp{y\oplus a}{d}))
\end{aligned}
\end{equation*}
where $Q(u^\nu)$ ranges over arbitrary formulas of $\PAw$ (and does not contain $x,y,a,d$ free).
\end{definition}

Note that our axiomatic formulation no longer mentions a main ordering $<$, which is instead induced by $\oplus$ and $\prec$. Note also that chain boundedness of $<$ is not formulated as a part of the axiom itself, and as such, validity of $\ZL_{\comp{}{},\oplus,\prec}$ in some given type structure will depend on the interpretation of $<$ being chain bounded in that model. We could of course seek to incorporate chain boundedness into the syntactic definition of Zorn's lemma and give a computational interpretation to the axiom as a whole. This would lead to a fascinating but extremely complex computational problem which would steer us in a quite different direction to the current article, and so we leave this to future work (cf. Section \ref{sec-conc}). We now illustrate our new principle by continuing our example from Section \ref{sec-prelim-zorn}, whose computational content has already been studied in in \cite{PowSchWie(2019.0)}.

\begin{example}
\label{ex-countableideal}
Let $\sigma:=\nat\to\bool$, $\delta:=\nat$, $\nu:=\bool^\ast$ and $\rho:=\nat\times (\nat\to\bool)$ and define
\begin{equation*}
\begin{aligned}
\comp{x}{d}&:=\seq{x(0),\ldots,x(d-1)}\\
x\oplus (n,y)&:=x\cup y\\
x\prec (n,y)&:= \bneg{x(n)}\cdot y(n)
\end{aligned}
\end{equation*}
where $\bneg{b}$ represents the negation of the boolean $b$ and 
\begin{equation*}
(x\cup y)(n):=\mbox{$1$ if $x(n)=1$ or $y(n)=1$ else $0$}.
\end{equation*}
These are all clearly definable as closed terms of System T, and in this case $\ZL_{\comp{}{},\oplus,\prec}$ is equivalent to
\begin{equation*}
\begin{aligned}
\exists x P(x)\to &\exists y(P(y)\\
&\wedge \forall (n,z)(\bneg{y(n)}=z(n)=1\to \neg P(y\cup z)))
\end{aligned}
\end{equation*}
for $P(x):\equiv \forall d\; Q(\seq{x(0),\ldots,x(d-1)})$. Here we can imagine objects $x:\nat\to\bool$ as characteristic functions for subsets of the natural numbers. Moreover, given some countable ring $R$ whose elements can be coded up as natural numbers and whose operations $+_R$ and $\cdot_R$ represented as primitive recursive functions $\nat\times\nat\to\nat$, the existence of a maximal ideal in $R$ would be provable in $\PAw+\ZL_{\comp{}{},\oplus,\prec}$. We do not give full details of this (an outline of the formalisation can be found in \cite{PowSchWie(2019.0)}). Instead we simply sketch why both $\modset$ and $\modcont$ satisfy $\ZL_{\comp{}{},\oplus,\prec}$ and are thus models of $\PAw+\ZL_{\comp{}{},\oplus,\prec}$.

Working in $\modcont$ (the same argument is also valid for $\modset$) we apply Theorem \ref{thm-open} for $X:=C_{\nat\to\bool}\cong \BB^\NN$ which via the identification of sets with their characteristic function is isomorphic to the powerset of $\NN$, together with the proper subset relation, observing that
\begin{equation*}
x\subset y\Leftrightarrow \exists (n,z)\in \NN\times\BB^\NN(y=x\cup z\wedge n\notin x\wedge n\in z)
\end{equation*}
where the right hand side is just the interpretation of the formula $\exists (n,z)(y=x\oplus (n,z)\wedge y\prec (n,z))$ in $\modcont$. Clearly $(X,\subset)$ is chain bounded w.r.t. $\comp{x}{d}:=\seq{x(0),\ldots,x(d-1)}$ using a simplified version of the argument in Example \ref{ex-ccapp}. Therefore for any formula $Q(\seq{u(0),\ldots,u(k-1)})$ in $\modcont$ on finite sequences of natural numbers the resulting formula $P(x):\equiv \forall d\; Q(\seq{x(0),\ldots,x(d-1)})$ on $X$ in $\modcont$ is piecewise w.r.t. $\comp{\cdot}{}$, and thus by Theorem \ref{thm-open} whenever $\exists x P(x)$ is satisfied there exists some $y\in X$ such that $P(y)$, and also $\neg P(z)$ whenever $y<z$ (or alternatively $\forall (n,z)(n\notin x\wedge n\in z\Rightarrow \neg P(y\cup z)$). Thus $\ZL_{\comp{}{},\oplus,\prec}$ is valid in $\modcont$.
\end{example}

%%%%%%%%%%%%%%%%%%%%%%%%%%%%%%%%%%%%%%%%%%%%%%%%%
%%%%%%%%%%%%%%%%%%%%%%%%%%%%%%%%%%%%%%%%%%%%%%%%%
\section{Recursion over chain bounded partial orders}
%%%%%%%%%%%%%%%%%%%%%%%%%%%%%%%%%%%%%%%%%%%%%%%%%
%%%%%%%%%%%%%%%%%%%%%%%%%%%%%%%%%%%%%%%%%%%%%%%%%
\label{sec-rec}

We now come to our first main contribution, in which we study modes of recursion over chain bounded partial orders that form an analogue to the axiom $\ZL_{\comp{}{},\oplus,\prec}$. A precise connection between a restricted form of $\ZL_{\comp{}{},\oplus,\prec}$ and our second mode of recursion will be presented in Section \ref{sec-dial}, but the results of this section are more general, and we consider them to be of interest in their own right. As such, this section could be read as a short, self-contained study in which we explore different recursion schemes over orderings induced by the parameters $(\oplus,\prec)$. Totality of our recursors will be justified using a variant of Theorem \ref{thm-open}, and the two main modes of recursion considered here will primarily differ in how we achieve `piecewise-ness' of the totality predicate. The first, which we characterise as `simple' recursion, uses a sequential continuity principle but is valid only for discrete output types, whereas the second, which we call `controlled' recursion, is total for arbitrary output type but uses an auxiliary parameter in the recursor itself to ensure wellfoundedness. 

For the remainder of this section, we fix types $\sigma,\rho,\delta$ and $\nu$, together with closed terms $\comp{\cdot}{}:\sigma\times\delta\to \nu$, $\oplus:\sigma\times\rho\to\sigma$ and $\prec:\sigma\times\rho\to\bool$ of System T, which are analogous to those in Section \ref{sec-syntactic}. For definitions and results below which take place in the model $\modpar$, note that $\comp{\cdot}{}\in T_{\sigma\times\delta\to\nu}$ denotes some canonical representation of the corresponding term of System T as a total continuous functional, and similarly for $\oplus\in T_{\sigma\times\rho\to\sigma}$ and $\prec\in T_{\sigma\times\rho\to\bool}$ (cf. Section \ref{sec-prelim-cont}).

%%%%%%%%%%%%%%%%%%%%%%%%%%%%%%%%%%%%%%%%%%%%%%%%%
\subsection{Simple recursion over $(\oplus,\prec)$}
%%%%%%%%%%%%%%%%%%%%%%%%%%%%%%%%%%%%%%%%%%%%%%%%%
\label{sec-rec-simple}

The first recursion scheme we consider is represented by the constant $\Phi^\theta_{\oplus,\prec}$ equipped with defining equation 
\begin{equation}
\label{eqn-simple}
\Phi fx=_\theta fx(\lambda a \; . \; \Phi f(x\oplus a)\mbox{ if $a\succ x$ else $0_\theta$})
\end{equation}
where $f:\sigma\to (\rho\to\theta)\to\theta$ and $x:\sigma$, and we recall that $0_\theta$ is a canonical zero term of type $\theta$. Note that in the defining equation we suppressed the parameters on $\Phi$ - and we will continue to do this whenever there is no risk of ambiguity. In what follows, it will be helpful to use the abbreviation 
\begin{equation*}
\Phi_{f,x}:=\lambda a \; . \; \Phi f(x\oplus a)\mbox{ if $a\succ x$ else $0_\theta$}
\end{equation*}
so that the defining equation can then be expressed as 
\begin{equation*}
\Phi fx=fx\Phi_{f,x}.
\end{equation*}

%In the definitions the parameters above can be represented by total objects $\comp{\cdot}{(\cdot)}\in T_{\sigma\times\delta\to\nu}$, $\oplus\in T_{\sigma\times\rho\to\sigma}$ and $\prec\in T_{\sigma\times\rho\to\bool}$. This leads us to the following key definitions.
%
\begin{definition}[$\modpar$]
\label{def-cont}
Let $L\subseteq T_\sigma$. We say that a functional $\psi\in P_{\sigma\to\theta}$ is \emph{piecewise continuous} with respect to $\comp{\cdot}{}$ and $L$, if for any $x\in L$ such that $\psi x\in T_\theta$ there exists some $d\in T_\delta$ such that 
\begin{equation*}
\forall y\in T_\sigma(\comp{x}{d}=\comp{y}{d}\Rightarrow \psi y\in T_\theta).
\end{equation*}
\end{definition}
\begin{definition}[$\modpar$]
\label{def-compatible}
A partial order $<$ on $T_\sigma$ is compatible with $(\oplus,\prec)$ if $x<x\oplus a$ for any $(x,a)\in T_{\sigma\times\rho}$ with $x\prec a$ (i.e. $\prec(x,a)=_\BB 1$).
\end{definition}
The next definition is a slight adaptation of Definition \ref{def-ccapp}, where now we require $\ub{\gamma}$ to be an element of some subset $L$ of the main partial order.
\begin{definition}[$\modpar$]
\label{def-ccpartial}
A partial order $<$ on $T_\sigma$ is chain bounded with respect to $\comp{\cdot}{}$ and $L\subseteq T_\sigma$ if every nonempty chain $\gamma\subseteq T_\sigma$ has an upper bound $\ub{\gamma}\in L$ satisfying the property that for any $d\in T_\delta$ there exists some $x \in\gamma$ such that $\comp{\ub{\gamma}}{d}=\comp{x}{d}$.
\end{definition}
We now come to our first totality result. This establishes a condition on inputs $f$ which ensures totality of $\Phi f$. As we will see, in certain natural situations we can use this to show that $\Phi f$ is total for \emph{any} total $f$, and thus $\Phi$ itself is total. However, our result is more general as it also allows us to establish totality of $\Phi f$ in cases where $\Phi$ may not be.
\begin{theorem}[$\modpar$]
\label{thm-simpletot}
Let $\Phi$ denote the least fixed point of the primitive recursive defining equation (\ref{eqn-simple}), and suppose that there exist $<$ on $T_\sigma$ and $L\subseteq T_\sigma$ such that $<$ is compatible with $(\oplus,\prec)$ and chain bounded w.r.t. $\comp{\cdot}{}$ and $L$. Let $f\in T_{\sigma\to (\rho\to\theta)\to\theta}$. Then whenever $\Phi f\in P_{\sigma\to\theta}$ is piecewise continuous w.r.t. $\comp{\cdot}{}$ and $L$, it follows that $\Phi f\in T_{\sigma\to\theta}$.
\end{theorem}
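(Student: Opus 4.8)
The plan is to show that $\Phi f x \in T_\theta$ for every $x \in T_\sigma$, arguing by contradiction using the variant of Zorn's lemma available via Theorem~\ref{thm-open}. Suppose there is some $x_0 \in T_\sigma$ with $\Phi f x_0 \notin T_\theta$. The idea is to apply Theorem~\ref{thm-open} to the partial order $<$ on $T_\sigma$ and the predicate $R(x) :\equiv \Phi f x \notin T_\theta$, after checking this predicate is piecewise and that the relevant chain-boundedness holds. The piecewise-ness of $R$ is exactly what the hypothesis ``$\Phi f$ is piecewise continuous w.r.t. $\comp{\cdot}{}$ and $L$'' gives us (in contrapositive form): if $\Phi f x \in T_\theta$ then some finite approximation $\comp{x}{d}$ already forces totality for all $y$ agreeing with $x$ on $d$, so the complementary predicate $R$ is determined by finite approximations. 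Here is the one subtlety: Definition~\ref{def-cont} only quantifies $x$ over $L$, not all of $T_\sigma$, so I will need to run the Zorn argument inside $L$ (or more precisely, observe that the maximal element produced lands in $L$ because upper bounds of chains do). Chain-boundedness w.r.t.\ $\comp{\cdot}{}$ and $L$ (Definition~\ref{def-ccpartial}) guarantees every nonempty chain has an upper bound $\ub{\gamma} \in L$ with the approximation-matching property, which is precisely the hypothesis of Theorem~\ref{thm-open}.

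The key steps, in order, are as follows. First, I would record the contrapositive of piecewise continuity: $R(x) \equiv \Phi f x \notin T_\theta$ is piecewise with respect to $\comp{\cdot}{}$ on the subset $L$, with the quantifier-free-ish part $\neg Q_0(u)$ where $Q_0(\comp{x}{d})$ asserts that $\comp{x}{d}$ forces totality. Second, I would verify that $(L, <)$ (or $(T_\sigma,<)$ restricted so that chain upper bounds lie in $L$) is chain bounded w.r.t.\ $\comp{\cdot}{}$ in the sense needed by Theorem~\ref{thm-open}; this is immediate from Definition~\ref{def-ccpartial}. Third, assuming $R(x_0)$ holds for some $x_0$, Theorem~\ref{thm-open} yields a $<$-element $y$ with $R(y)$ but $\neg R(z)$ whenever $y < z$ --- i.e.\ $\Phi f y \notin T_\theta$ yet $\Phi f z \in T_\theta$ for all $z > y$. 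Fourth --- and this is the crux --- I derive a contradiction by unfolding the defining equation~(\ref{eqn-simple}): $\Phi f y = f y\, \Phi_{f,y}$, and since $f \in T_{\sigma \to (\rho\to\theta)\to\theta}$ is total, $f y\, \Phi_{f,y} \in T_\theta$ as soon as $\Phi_{f,y} \in T_{\rho \to \theta}$. So it suffices to show $\Phi_{f,y} a \in T_\theta$ for all total $a$. By definition $\Phi_{f,y} a = \Phi f(y \oplus a)$ if $a \succ y$, and $= 0_\theta \in T_\theta$ otherwise. In the first case $a \succ y$ means $\prec(y,a) =_\BB 1$, so by compatibility (Definition~\ref{def-compatible}) we have $y < y \oplus a$, hence $\neg R(y \oplus a)$, i.e.\ $\Phi f(y \oplus a) \in T_\theta$. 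Thus $\Phi_{f,y}$ is total, so $\Phi f y \in T_\theta$, contradicting $R(y)$. Therefore no such $x_0$ exists and $\Phi f$ is total.

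The main obstacle I anticipate is reconciling the ``$x \in L$'' restriction in the definition of piecewise continuity with the application of Theorem~\ref{thm-open}, which as stated ranges the maximality conclusion over all of $X$. The clean way around this is to take $X := L$ itself as the carrier in Theorem~\ref{thm-open}, check that the restriction of $<$ to $L$ is still chain bounded w.r.t.\ $\comp{\cdot}{}$ (true, since Definition~\ref{def-ccpartial} delivers $\ub{\gamma} \in L$), and note that the predicate $R$ restricted to $L$ is piecewise; one must also confirm the maximal element $y \in L$ still satisfies the ``$\neg R(z)$ for $z > y$'' conclusion relative to all $z \in L$ --- but in the contradiction step the relevant $z = y \oplus a$ may a priori lie outside $L$. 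This is where I would need to be slightly careful: either argue that $y \oplus a \in L$ (which may not hold in general), or strengthen the bookkeeping, e.g.\ by first establishing the auxiliary claim that $\Phi f z \in T_\theta$ for \emph{all} $z \in T_\sigma$ with $z > y$ by a separate induction-free argument, or by phrasing the whole Zorn application so that the predicate is about the restriction $\Phi f \restriction \{z : z \geq y\}$. A secondary minor point is making sure the ``if $a \succ y$ else $0_\theta$'' branch is handled in the model $\modpar$ correctly, i.e.\ that this definition-by-cases on a total boolean $\prec(y,a)$ yields a total value --- routine, since $\prec$ is total and $0_\theta$ is total.
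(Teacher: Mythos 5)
Your core contradiction step (unfolding $\Phi f y = fy\,\Phi_{f,y}$, using compatibility of $<$ with $(\oplus,\prec)$ and totality of $f$) is exactly right and is the same as the paper's. The genuine gap is the one you flag yourself and then leave open: the mismatch between where piecewise continuity is available (only at points of $L$) and where maximality must act (all of $T_\sigma$, since $y\oplus a$ need not lie in $L$). Taking $X:=L$ as the carrier in Theorem \ref{thm-open} does not work, because the maximality conclusion then only excludes bad elements $z\in L$ above $y$, whereas your final step needs $\Phi f(y\oplus a)\in T_\theta$ for arbitrary $y\oplus a\in T_\sigma$; none of your proposed repairs is carried out, the first ($y\oplus a\in L$) is false in general, and the second (an auxiliary claim that $\Phi f z\in T_\theta$ for all $z>y$) is precisely what maximality is supposed to deliver, so it begs the question. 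A smaller unaddressed point: with carrier $L$ you would also need the initial bad point $x_0$ to lie in $L$, which it need not. Note also that Theorem \ref{thm-open} cannot be applied with carrier $T_\sigma$ either, since the predicate $R(x):\equiv \Phi fx\notin T_\theta$ is only `piecewise' at points of $L$, not on all of $T_\sigma$.

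The paper's proof avoids Theorem \ref{thm-open} altogether and reruns the Zorn argument directly on the bad set $S:=\{x\in T_\sigma \; | \; \Phi fx\notin T_\theta\}$, ordered by $<$ on all of $T_\sigma$. Chain boundedness of $S$ is the only place piecewise continuity enters: a nonempty chain $\gamma\subseteq S$ has an upper bound $\ub{\gamma}\in L$ by Definition \ref{def-ccpartial}; if $\Phi f\ub{\gamma}$ were total, piecewise continuity at $\ub{\gamma}\in L$ would give some $d$ forcing totality, while Definition \ref{def-ccpartial} gives $x\in\gamma$ with $\comp{\ub{\gamma}}{d}=\comp{x}{d}$, so $\Phi fx\in T_\theta$, contradicting $x\in S$; hence $\ub{\gamma}\in S$. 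Zorn's lemma then yields a maximal element $x$ of $S$, and maximality is now relative to all of $T_\sigma$: if $a\succ x$ then $x<x\oplus a$, so $x\oplus a\notin S$, and your contradiction step goes through verbatim to show $x\notin S$. So the fix is not to shrink the carrier to $L$, but to keep the carrier $T_\sigma$ and use $L$ only for chain upper bounds, which is exactly where (and only where) the piecewise continuity hypothesis applies.
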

\begin{proof}
By Zorn's lemma. Suppose that $\Phi f$ is piecewise continuous, and consider the set $S\subseteq T_\sigma$ given by
\begin{equation*}
S:=\{x\in T_\sigma \; | \; \Phi fx\notin T_\theta\}.
\end{equation*}
We first show that $(S,<)$ is chain bounded in the usual sense. Taking some nonempty chain $\gamma\subseteq S$, by chain boundedness in the sense of Definition \ref{def-ccpartial} this has some upper bound $\ub{\gamma}\in L$. Suppose for contradiction that $\Phi f\ub{\gamma}\in T_\theta$. By piecewise continuity of $\Phi f$ there exists some $d\in T_\delta$ such that $\comp{\ub{\gamma}}{d}=\comp{y}{d}$ implies $\Phi fy\in T_\theta$ for any $y\in T_\sigma$. But then there exists some $x\in \gamma$ with $\comp{\ub{\gamma}}{d}= \comp{x}{d}$ and thus $\Phi fx\in T_\theta$, contradicting $x\in S$. Therefore $\Phi f\ub{\gamma}\notin T_\theta$ and thus $\ub{\gamma}\in S$.

To prove the main result, suppose for contradiction that $\Phi f\notin T_{\sigma\to\theta}$, which implies that $S\neq\emptyset$. Then by Zorn's lemma, $S$ has some maximal element $x$. But for any $a\in T_\rho$ with $x\prec a$ we have $x<x\oplus a$ by compatibility, and thus $\Phi_{f,x}(a)=\Phi f(x\oplus a)\in T_\theta$. It follows that $\Phi_{f,x}\in T_{\rho\to\theta}$, since in the other case $\neg (x\prec a)$ we have $\Phi_{f,x}(a)=0_\theta\in T_\theta$. But then by totality of $f$ we have $\Phi fx=fx\Phi_{f,x}\in T_\theta$, contradicting $x\in S$. Therefore $S=\emptyset$ and so $\Phi f\in T_{\sigma\to\theta}$.
\end{proof}
The technique we have used in this proof is a generalisation of the proof of Theorem 0.3 from \cite{Berger(2002.0)}, which uses Zorn's lemma to show that the so-called Berardi-Bezem-Coquand functional defined in \cite{BBC(1998.0)} is total. We now give a concrete example of how the result can be applied, but first we state and prove a sequential continuity lemma (cf. also \cite[Lemma 0.1]{Berger(2002.0)}), which will also be useful in later sections.
\begin{lemma}[$\modpar$]
\label{lem-seqcont}
Let $\theta$ be a \emph{discrete} type i.e. one which does not contain function types. Suppose that $\psi\in P_{(\nat\to\sigma)\to\theta}$ where $\sigma$ is some arbitrary type, that $x\in T_{\nat\to\sigma}$ satisfies $x(\bot)=\bot_\sigma$ and that $\psi x\in T_\theta$. Then there is some $d\in\NN$ such that for any $y\in P_{\nat\to\sigma}$, whenever $x(i)=y(i)$ for all $i<d$ then $\psi x=\psi y$. 
\end{lemma}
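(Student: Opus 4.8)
The statement is a standard domain-theoretic compactness/continuity fact: a continuous function into a flat domain can only inspect finitely much of its argument before committing to a total value. The plan is to exploit that $\psi$ is continuous in the domain-theoretic sense (monotone, preserves lubs of directed sets), together with the fact that $\theta$ is discrete so that $T_\theta$ consists of maximal elements of the flat domain $P_\theta$ — hence any approximation below a total value must already equal it.

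Here is how I would carry it out. First, observe that $x : \nat\to\sigma$ can be written as the lub of the directed family of its "finite restrictions" $x_d$, where $x_d(i) := x(i)$ for $i < d$ and $x_d(i) := \bot_\sigma$ for $i \geq d$; I would check that $\{x_d\}_{d\in\NN}$ is a chain (hence directed) in $P_{\nat\to\sigma}$ with $\bigsqcup_d x_d = x$, using the hypothesis $x(\bot) = \bot_\sigma$ to make sure $x_d$ is genuinely a continuous function with $x_d \sqsubseteq x$ — this last point is exactly why the hypothesis $x(\bot)=\bot_\sigma$ is needed, since without it $x_d$ need not be $\sqsubseteq x$. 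Then by continuity of $\psi$ we get $\psi x = \bigsqcup_d \psi(x_d)$ in $P_\theta$. Since $\psi x \in T_\theta$ and $T_\theta$ at a discrete type consists precisely of the maximal (total) elements of the flat domain $P_\theta$, and the lub of a chain in a flat domain is attained, there must be some $d$ with $\psi(x_d) = \psi x$. Fix this $d$. Now take any $y \in P_{\nat\to\sigma}$ with $x(i) = y(i)$ for all $i < d$. Then $x_d \sqsubseteq y$: indeed for $i < d$ we have $x_d(i) = x(i) = y(i)$, and for $i \geq d$ we have $x_d(i) = \bot_\sigma \sqsubseteq y(i)$; one also checks $x_d \sqsubseteq y$ as functions, i.e. pointwise below together with the monotonicity already built into $x_d$ being continuous. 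By monotonicity of $\psi$, $\psi x = \psi(x_d) \sqsubseteq \psi y$; but $\psi x$ is maximal in $P_\theta$, so $\psi x = \psi y$, and in particular $\psi y = \psi x \in T_\theta$, which gives the claim.

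The only genuinely delicate point — the step I would flag as the main obstacle — is the verification that $x_d$ is a well-defined element of $P_{\nat\to\sigma}$ with $x_d \sqsubseteq x$, and that the family is directed with lub $x$. For a flat input type $\nat$ this is immediate, but the lemma allows $\sigma$ to be arbitrary, so $\nat\to\sigma$ is the domain $[P_\nat\to P_\sigma]$ and an element is a continuous function on the flat domain $\NN_\bot$; such a function is determined by its value at $\bot$ and its values at the numerals, subject to monotonicity, i.e. $x(\bot)\sqsubseteq x(n)$ for all $n$. The truncation $x_d$ sends $\bot$ and all $n\geq d$ to $\bot_\sigma$ and $n<d$ to $x(n)$; this is continuous (any function on $\NN_\bot$ that is monotone is continuous, and monotonicity holds since $\bot_\sigma\sqsubseteq x(n)$), and $x_d\sqsubseteq x$ precisely because $x_d(\bot)=\bot_\sigma = x(\bot)$ by hypothesis and $x_d(n)\sqsubseteq x(n)$ pointwise. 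Directedness of $\{x_d\}$ is clear since $d\leq d'$ implies $x_d\sqsubseteq x_{d'}$, and $\bigsqcup_d x_d = x$ follows by evaluating at each point of $\NN_\bot$. Everything else is routine once these structural facts are in place; the conceptual content is entirely in "continuous functions into flat domains are finitely determined at total values," for which this is just the concrete instantiation.
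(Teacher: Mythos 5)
Your proof is correct and is essentially the paper's argument: the paper obtains a compact $x_0\sqsubseteq x$ with $\psi x_0\in T_\theta$ from Scott-openness of $T_\theta$ and then concludes exactly as you do, with $x_0$ in place of your $x_d$; your explicit truncation chain with $\psi x=\bigsqcup_d \psi(x_d)$ is just a concrete way of producing that finite approximation, so the two routes differ only cosmetically. Two of your justifications should be repaired, though neither breaks the proof. First, the hypothesis $x(\bot)=\bot_\sigma$ is not what secures $x_d\sqsubseteq x$ --- that holds unconditionally, since $x_d(\bot)=\bot_\sigma\sqsubseteq x(\bot)$ whatever $x(\bot)$ is; the hypothesis is needed for the equality $\bigsqcup_d x_d=x$, which at the argument $\bot$ reads $\bot_\sigma=x(\bot)$, and it is exactly there that the lemma would fail without it (a continuous $\psi$ may depend on $x(\bot)$, e.g.\ $\psi z:=z(\bot)$ with $\sigma=\nat$). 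Second, $P_\theta$ is a flat domain only for $\theta=\bool,\nat$; a discrete type may be a product or finite-sequence type, whose domain is not flat. What you actually need is that for discrete $\theta$ the total elements are maximal in $P_\theta$ and that an increasing chain whose lub is total is eventually equal to that lub (equivalently, total elements at discrete types are compact, i.e.\ $T_\theta$ is Scott-open); both facts hold by induction on the structure of $\theta$, which is how the paper phrases its maximality claim. With these two adjustments your argument goes through verbatim.
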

\begin{proof}
We use a simple adaptation of the proof of Lemma 0.1 of \cite{Berger(2002.0)}.
Since $T_\theta$ is open in the Scott topology whenever $\theta$ is discrete, there is some compact $x_0\sqsubseteq x$ such that $\psi x_0\in T_\theta$, and since $\psi x_0\sqsubseteq \psi x$ we must in fact have $\psi x_0=\psi x$ (that $y\sqsubseteq z$ implies $y=z$ for $y\in T_\theta$ is evidently true for $\theta=\NN_\bot$ or $\theta=\BB_\bot$, and holds for arbitrary discrete $\theta$ by induction over its structure). Now, since $x_0$ is compact (i.e. contains only a finite amount of information) there is some $d\in\NN$ such that $x_0(i)=\bot_\sigma$ for all $i\geq d$. Suppose now that $y\in P_{\nat\to\sigma}$ satisfies $x(i)=y(i)$ for all $i<d$. We claim that $x_0\sqsubseteq y$. To see this, note that for $i<d$ we have $x_0(i)\sqsubseteq x(i)=y(i)$, for $i\geq d$ we have $x_0(i)=\bot_\sigma\sqsubseteq y(i)$, and for $i=\bot$ since $x(\bot)=\bot_\sigma$ we must also have $x_0(\bot)=\bot_\sigma\sqsubseteq y(\bot)$. Therefore $\psi x_0\sqsubseteq \psi y$ and since $\psi x_0\in T_\theta$ we must have $\psi y=\psi x_0=\psi x$.
\end{proof}
\begin{example}
\label{ex-simpleset}
Let $\comp{\cdot}{}$, $\oplus$ and $\prec$ be the obvious total representatives of the primitive recursive functions defined in Example \ref{ex-countableideal} i.e. extensions that are defined also on non-total input, for example
\begin{equation*}
\begin{aligned}
\comp{x}{d}&:=\seq{x(0),\ldots,x(d-1)}\mbox{ for $d\in\NN$ else $\bot$}
\end{aligned}
\end{equation*}
and so on. We observe that $T_\sigma=T_{\nat\to\bool}$ is the set of all functions $x:\NN_\bot\to\BB_\bot$ which are monotone (in the domain theoretic sense) and satisfy $x(n)\in\BB$ whenever $n\in\NN$. We define $L\subset T_{\nat\to\bool}$ to consist of those functions which are \emph{strict}, in that they satisfy in addition $x(\bot)=\bot$.

Now suppose that $\theta$ is discrete. Then \emph{any} function $\psi\in P_{(\nat\to\bool)\to \theta}$ is piecewise continuous w.r.t. $\comp{\cdot}{}$ and $L$. To see this, take any strict $x$ such that $\psi x\in T_\theta$. Then by Lemma \ref{lem-seqcont} there exists some $d\in\NN$ such that for any $y\in P_{\nat\to\bool}$ (and so in particular $y\in T_{\nat\to\bool}$) we have $\psi y=\psi x\in T_\theta$ whenever $\comp{x}{d}=\seq{x(0),\ldots,x(d-1)}=\seq{y(0),\ldots,y(d-1)}=\comp{y}{d}$.

Next define $<$ on $T_{\nat\to\bool}$ by $x<y$ iff $x(i)=1\Rightarrow y(i)=1$ for all $i\in\NN$ and there exists at least one $j\in\NN$ with $x(j)=0$ and $y(j)=1$. Then $<$ is compatible with $(\oplus,\prec)$, and moreover, for any nonempty chain $\gamma\subseteq T_{\nat\to\bool}$ define $\ub{\gamma}\in L$ by
\begin{equation*}
\ub{\gamma}(n):=\begin{cases}1 & \mbox{if $x(n)=1$ for some $x\in\gamma$}\\ 0 & \mbox{otherwise}\end{cases}
\end{equation*}
and $\ub{\gamma}(\bot)=\bot$. Then clearly $x\leq \ub{\gamma}$ for all $x\in\gamma$, and moreover for any $d\in\NN=T_{\nat}$, by a variant of the argument in Example \ref{ex-ccapp} we have $\comp{\ub{\gamma}}{d}=\comp{x}{d}$ for some $x\in \gamma$. Thus $<$ is chain bounded w.r.t. $\comp{\cdot}{}$ and $L$.

Now, let $\Phi$ denote the least fixed point in $\modpar$ of
\begin{equation}
\label{eqn-update}
\Phi fx=_\theta fx(\lambda (n,y)\; . \; \Phi f(x\cup y)\mbox{ if $\bneg{x(n)}\cdot y(n)=1$ else $0_\theta$}).
\end{equation}
By Theorem \ref{thm-simpletot}, taking any total $f$, since $\Phi f\in P_{(\nat\to\bool)\to\theta}$ is automatically piecewise continuous, we have that $\Phi f$ is total, and therefore $\Phi$ is a total object in $\modpar$. This implies that $\modcont\models \PAw+\Phi$ for the extension of $\PAw$ with some new constant satisfying the defining axiom (\ref{eqn-update}).
\end{example}

%%%%%%%%%%%%%%%%%%%%%%%%%%%%%%%%%%%%%%%%%%%%%%%%%
\subsection{On non-discrete output types}
%%%%%%%%%%%%%%%%%%%%%%%%%%%%%%%%%%%%%%%%%%%%%%%%%
\label{sec-rec-problem}

In Example \ref{ex-simpleset} we have essentially shown that a simple variant of `update induction' in the sense of \cite{Berger(2004.0)} is total. In fact, with a slight modification of the above proof we would be able to reprove totality of update induction in its general form. However, in this paper we are primarily interested in forms of recursion on chain bounded partial orders which \emph{do not} correspond to the simple recursive scheme (\ref{eqn-simple}). The reason for this is that in order to establish totality of $\Phi f$ for any total $f$, we are typically required to restrict the complexity of the output type $\theta$ to being discrete, so that something along the lines of Lemma \ref{lem-seqcont} applies. As we will see, this is a problem for the functional interpretation.

Before we go on, we illustrate why extending $\PAw/\HAw$ with (\ref{eqn-update}) for non-discrete types does not even result in a consistent theory! Let us set $\theta:=\nat\to\nat$ and define $f$ by
\begin{equation*}
fxp:=\lambda n\; . \; 1+p(n,\{n\})(n+1)
\end{equation*}
where we identify the set $\{n\}$ with its characteristic function of type $\nat\to\bool$. Then defining $k:=\Phi f(\emptyset)0:\nat$ we have
\begin{equation*}
\begin{aligned}
k&=1+\Phi f({\{0\}})(1)=2+\Phi f({\{0,1\}})(2)\\
&=\ldots =k+1+\Phi f({\{0,\ldots,k\}})(k+1)>k
\end{aligned}
\end{equation*}
which is inconsistent with $\PAw/\HAw$. The key point at which the argument from Example \ref{ex-simpleset} fails is that Lemma \ref{lem-seqcont} is no longer valid for $\theta:=\nat\to\nat$: if $\psi x$ is a function then it can in general query an infinite part of $x$. To overcome this we could restrict our attention to those $f$ such that $\Phi f$ is piecewise continuous and thus total for non-discrete output type: For example, let
\begin{equation*}
fxp:=\lambda n<\num{N}\; . \; 1+p(n,\{n\})(n+1)
\end{equation*}
for some numeral $\num{N}:\nat$, so that $p$ is only queried finitely many times. Then working in $\modpar$, for total $x$ we would have
\begin{equation*}
\Phi fx=\lambda n\; . \; 1+\begin{cases}\Phi f(x\cup\{n\})(n+1) & \mbox{if $x(n)=1\wedge n<N$}\\ 0 & \mbox{otherwise} \end{cases}
\end{equation*}
and so a point of continuity for $\Phi fx$ could be taken to be the maximum of all points of continuity of the functions $\lambda y.\Phi f(y\cup\{n\})(n+1)$ for $n<N$ and at point $y:=x$. 

We now propose cleaner way of extending (\ref{eqn-update}) to non discrete output types. Instead of restricting $f$, we add a new parameter $\omega$ which controls the recursion directly.
%
%%%%%%%%%%%%%%%%%%%%%%%%%%%%%%%%%%%%%%%%%%%%%%%%%
\subsection{Controlled recursion over $(\oplus,\prec)$}
%%%%%%%%%%%%%%%%%%%%%%%%%%%%%%%%%%%%%%%%%%%%%%%%%
\label{sec-rec-controlled}

We modify the scheme (\ref{eqn-simple}), resulting in a slightly more elaborate mode of recursion in which the continuity behaviour is controlled by some auxiliary functional $\omega$. Define the constant $\Psi^\theta_{\oplus,\prec}$ (from now on omitting the parameters) by
\begin{equation}
\label{eqn-controlled}
\begin{aligned}
&\Psi \omega fx=_\theta \\
&f\cont{x}{\Psi}{\omega,f}(\lambda a\; . \; \Psi \omega f(\cont{x}{\Psi}{\omega,f}\oplus a)\mbox{ if $a\succ \cont{x}{\Psi}{\omega,f}$ else $0_\theta$})
\end{aligned}
\end{equation}
where $f:\sigma\to (\rho\to\theta)\to\theta$ and $\omega:\sigma\to (\rho\to\theta)\to\sigma$ and $\cont{x}{\Psi}{\omega,f}$ is defined by
\begin{equation*}
\label{eqn-omega}
\cont{x}{\Psi}{\omega,f}:=_\sigma\omega x(\lambda a\; . \; \Psi\omega f(x\oplus a)\mbox{ if $a\succ x$ else $0_\theta$}).
\end{equation*}
Observe that $\Psi$ is still defined as the fixed point of a simple closed term of $\PAw$, and as it will turn out, this modified scheme will allow us to admit output of arbitrary type level. Moreover, we will show later that by instantiating $\omega$ by a suitable closed term of $\PAw$, we can use this recursive scheme to define a realizer for the functional interpretation of our axiomatic form of Zorn's lemma. As before, we use the abbreviation
\begin{equation*}
\Psi_{\omega,f,x}:=\lambda a\; . \; \Psi\omega f(x\oplus a)\mbox{ if $a\succ x$ else $0_\theta$}
\end{equation*}
so that the defining equation (\ref{eqn-controlled}) now becomes
\begin{equation*}
\Psi\omega fx=f\cont{x}{\Psi}{\omega,f}\Psi_{\omega,f,\cont{x}{\Psi}{\omega,f}}
\end{equation*}
for $\cont{x}{\Psi}{\omega,f}:=\omega x\Psi_{\omega,f,x}$. We now give a totality theorem analogous to Theorem \ref{thm-simpletot}, but with the notion of piecewise continuity replaced by a slightly more subtle property.
\begin{definition}[$\modpar$]
\label{def-trunc}
We say that a pair of functionals $\psi\in P_{\sigma\to\sigma}$ and $\phi\in P_{\sigma\to\theta}$ form a truncation with respect to $\comp{\cdot}{}$, $L\subseteq T_\sigma$ and some partial order $<$ on $T_\sigma$ if the following two conditions are satisfied:
\begin{enumerate}[(a)]

\item\label{truncb} For any $x,y\in T_\sigma$, if $\psi x\in T_\sigma$ and $\psi x<y$ then $x<y$.

\item\label{trunca} For any $x\in L$ such that $\psi x\in T_\sigma$ and $\phi(\psi x)\in T_\theta$ there exists some $d\in T_\delta$ such that
\begin{equation*}
\forall y\in T_\sigma(\comp{x}{d}=\comp{y}{d}\Rightarrow \psi x=\psi y).
\end{equation*}

\end{enumerate} 
\end{definition}
\begin{example}
\label{ex-trunc}
Continuing from Example \ref{ex-simpleset} with $\sigma:=\nat\to\bool$ but $\theta$ now arbitrary, for any $N\in\NN$ the continuous functional $\psi_N\in T_{\sigma\to\sigma}$ defined by
\begin{equation*}
\psi_Nx(n):=\begin{cases}x(n) & \mbox{if $n<N$}\\ 1 & \mbox{otherwise}\end{cases}
\end{equation*}
and $\psi_Nx(\bot)=\bot$ forms a truncation with any other functional $\phi\in P_{\sigma\to\theta}$ w.r.t $\comp{\cdot}{}$, $L$ and $<$. To see this, observe that for any strict $x$, if $y\in T_\sigma$ satisfies $\seq{x(0),\ldots,x(N-1)}=\seq{y(0),\ldots,y(N-1)}$ then $\psi_Nx=\psi_Ny$ and so $\psi_N$ satisfies condition (\ref{trunca}) of being a truncation for $d:=N$. For condition (\ref{truncb}), if $\psi_Nx<y$ this means that $x(i)=1\Rightarrow y(i)=1$ for all $i<N$, $y(i)=1$ for all $i\geq N$, and $x(j)=0$ and $y(j)=1$ for some $j<N$, from which it follows easily that $x<y$. 

On the other hand, the continuous functional $\phi_N\in T_{\sigma\to\sigma}$ defined by
\begin{equation*}
\phi_Nx(n):=\begin{cases}x(n) & \mbox{if $n<N$}\\ 0 & \mbox{otherwise}\end{cases}
\end{equation*}
and $\phi_Nx(\bot)=\bot$ satisfies condition (\ref{trunca}) of being a truncation, but not condition (\ref{truncb}), since for $x$ representing the characteristic function of the singleton set $\{N\}$ we have $\phi_Nx<x$ but not $x<x$. Finally, the identity functional $\iota\in T_{\sigma\to\sigma}$ clearly satisfies condition (\ref{truncb}) of being a truncation, but condition (\ref{trunca}) fails for e.g. $\phi$ also the identity function, since for some arbitrary $x\in L$ there is no $d\in\NN$ such that for any total $y$, $\seq{x(0),\ldots,x(d-1)}=\seq{y(0),\ldots,y(d-1)}$ implies $x(n)=y(n)$ for all $n\in\NN$.

\end{example}
\begin{theorem}[$\modpar$]
\label{thm-totcont}
Let $\Psi$ denote the least fixed point of the primitive recursive defining equation (\ref{eqn-controlled}), and suppose that there exist $<$ on $T_\sigma$ and $L\subseteq T_\sigma$ such that $<$ is compatible with $(\oplus,\prec)$ and chain bounded w.r.t. $\comp{\cdot}{}$ and $L$. Let $f\in T_{\sigma\to (\rho\to\theta)\to\theta}$ and $\omega\in T_{\sigma\to (\rho\to\theta)\to\sigma}$. Then whenever $\cont{\cdot}{\Psi}{\omega,f}\in P_{\sigma\to\sigma}$ and $\lambda x\; . \; fx\Psi_{\omega,f,x}\in P_{\sigma\to\theta}$ form a truncation w.r.t. $\comp{\cdot}{}$, $L$ and $<$, it follows that $\Psi\omega f\in T_{\sigma\to\theta}$.
\end{theorem}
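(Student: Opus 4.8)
The plan is to mimic the Zorn's lemma argument used for Theorem \ref{thm-simpletot}, but now working with the truncated point $\cont{x}{\Psi}{\omega,f}$ in place of $x$ itself. Write $\tot{x}:=\cont{x}{\Psi}{\omega,f}=\omega x\Psi_{\omega,f,x}$ for brevity, so that the defining equation reads $\Psi\omega f x = f\tot{x}\,\Psi_{\omega,f,\tot{x}}$. Assuming the truncation hypothesis, I would consider the set
\begin{equation*}
S:=\{x\in T_\sigma \; | \; \Psi\omega f x\notin T_\theta\}
\end{equation*}
and argue, exactly as before, that $(S,<)$ is chain bounded in the ordinary sense. Given a nonempty chain $\gamma\subseteq S$, take its upper bound $\ub{\gamma}\in L$ from Definition \ref{def-ccpartial}. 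Suppose for contradiction $\Psi\omega f\ub{\gamma}\in T_\theta$; since $\Psi\omega f\ub{\gamma}=f\tot{\ub{\gamma}}\,\Psi_{\omega,f,\tot{\ub{\gamma}}}$, this means $(\lambda x.fx\Psi_{\omega,f,x})(\psi\ub{\gamma})\in T_\theta$ where $\psi:=\cont{\cdot}{\Psi}{\omega,f}$; in particular $\psi\ub{\gamma}\in T_\sigma$ and $\phi(\psi\ub{\gamma})\in T_\theta$ for $\phi:=\lambda x.fx\Psi_{\omega,f,x}$. By truncation condition (\ref{trunca}) applied at $\ub{\gamma}\in L$ there is $d\in T_\delta$ with $\comp{\ub{\gamma}}{d}=\comp{y}{d}\Rightarrow \psi\ub{\gamma}=\psi y$ for all $y\in T_\sigma$. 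Pick $x\in\gamma$ with $\comp{\ub{\gamma}}{d}=\comp{x}{d}$; then $\tot{x}=\psi x=\psi\ub{\gamma}=\tot{\ub{\gamma}}$, hence $\Psi\omega f x=f\tot{x}\Psi_{\omega,f,\tot{x}}=f\tot{\ub{\gamma}}\Psi_{\omega,f,\tot{\ub{\gamma}}}=\Psi\omega f\ub{\gamma}\in T_\theta$, contradicting $x\in S$. So $\ub{\gamma}\in S$.

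Now suppose for contradiction $\Psi\omega f\notin T_{\sigma\to\theta}$, so $S\neq\emptyset$, and by Zorn's lemma take a maximal element $x\in S$. The key move is to apply truncation condition (\ref{truncb}): I claim $\psi x=\tot{x}\in T_\sigma$. Granting this for the moment, for any $a\in T_\rho$ with $\tot{x}\prec a$ we have $\tot{x}<\tot{x}\oplus a$ by compatibility, and by condition (\ref{truncb}) (with $\psi x=\tot{x}<\tot{x}\oplus a$) it follows that $x<\tot{x}\oplus a$, so $\tot{x}\oplus a\notin S$ by maximality of $x$, i.e. $\Psi\omega f(\tot{x}\oplus a)\in T_\theta$, which is precisely $\Psi_{\omega,f,\tot{x}}(a)$. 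In the other case $\neg(\tot{x}\prec a)$ we get $\Psi_{\omega,f,\tot{x}}(a)=0_\theta\in T_\theta$. Hence $\Psi_{\omega,f,\tot{x}}\in T_{\rho\to\theta}$, and then by totality of $f$ together with $\tot{x}\in T_\sigma$ we obtain $\Psi\omega f x=f\tot{x}\Psi_{\omega,f,\tot{x}}\in T_\theta$, contradicting $x\in S$. Therefore $S=\emptyset$ and $\Psi\omega f\in T_{\sigma\to\theta}$.

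It remains to justify the claim that $\tot{x}=\psi x\in T_\sigma$ for the maximal $x\in S$. I expect this to be the main obstacle, since a priori $\tot{x}$ need only lie in $P_\sigma$. The idea is a secondary minimisation: were $\psi x\notin T_\sigma$, one would like to derive that $\Psi\omega f x\notin T_\theta$ is "for the wrong reason" and use the truncation conditions to relocate the failure; more precisely, condition (\ref{truncb}) is vacuous when $\psi x\notin T_\sigma$, so maximality of $x$ gives no leverage, and one must instead argue directly. The cleanest route is probably to observe that if $\psi x\notin T_\sigma$ then $\Psi\omega f x=f(\psi x)(\cdots)$ can still fail to be total, but now $\psi x$ itself carries only partial information; one then runs the chain-boundedness argument on a refined set that tracks totality of $\psi$ rather than of $\Psi\omega f$, or alternatively strengthens the definition of $S$ to $S:=\{x\in T_\sigma\;|\;\tot{x}\notin T_\sigma \text{ or } \Psi\omega f x\notin T_\theta\}$ and re-checks chain boundedness using condition (\ref{trunca}) in the same way. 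I would pursue this latter formulation: redefine $S$ as above, verify $(S,<)$ is chain bounded exactly as in the first paragraph (condition (\ref{trunca}) still applies whenever $\phi(\psi\ub{\gamma})\in T_\theta$, and if $\psi\ub{\gamma}\notin T_\sigma$ the argument is even easier since one only needs $\psi\ub{\gamma}=\psi x$), and then for the maximal $x\in S$ first establish $\psi x=\tot{x}\in T_\sigma$ — because otherwise, as $\psi x\notin T_\sigma$ is the witnessing clause, we would still need $\Psi_{\omega,f,x}$ total to contradict this, which loops back; so the honest fix is that condition (\ref{truncb}) combined with compatibility must be shown to force $\psi x\in T_\sigma$ at a maximal point, and pinning down exactly why is the delicate part of the proof.
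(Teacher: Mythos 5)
There is a genuine gap, but it is smaller than you fear, and your own proposed repair is exactly what the paper does: the set used in the paper's proof is precisely your redefined $S:=\{x\in T_\sigma\mid \cont{x}{\Psi}{\omega,f}\notin T_\sigma\mbox{ or }\Psi\omega fx\notin T_\theta\}$, and its chain boundedness is verified just as you indicate. (Note that the larger set is genuinely needed: in your first paragraph the inference from $\phi(\psi\ub{\gamma})\in T_\theta$ to ``in particular $\psi\ub{\gamma}\in T_\sigma$'' is unjustified, since an application of a total $f$ can be total without its argument being total; this is exactly why the extra clause must be carried in $S$.) The step you leave unresolved --- showing $\cont{x}{\Psi}{\omega,f}\in T_\sigma$ at a maximal $x\in S$ --- does not ``loop back''. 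Maximality of $x$ in the enlarged $S$ says that every $y$ with $x<y$ satisfies both $\cont{y}{\Psi}{\omega,f}\in T_\sigma$ and $\Psi\omega fy\in T_\theta$. Hence for any $a\in T_\rho$ with $x\prec a$, compatibility gives $x<x\oplus a$ and therefore $\Psi_{\omega,f,x}(a)=\Psi\omega f(x\oplus a)\in T_\theta$, while for $\neg(x\prec a)$ the value is $0_\theta$; so $\Psi_{\omega,f,x}\in T_{\rho\to\theta}$ using only maximality and compatibility --- neither totality of the truncated point nor condition (\ref{truncb}) is needed for this. Now the hypothesis $\omega\in T_{\sigma\to(\rho\to\theta)\to\sigma}$, which your sketch never invokes, converts this into $\cont{x}{\Psi}{\omega,f}=\omega x\Psi_{\omega,f,x}\in T_\sigma$.

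With that in hand, the rest of your maximality argument is correct and coincides with the paper's: for $a$ with $\cont{x}{\Psi}{\omega,f}\prec a$, compatibility gives $\cont{x}{\Psi}{\omega,f}<\cont{x}{\Psi}{\omega,f}\oplus a$, condition (\ref{truncb}) (now applicable since $\cont{x}{\Psi}{\omega,f}\in T_\sigma$) yields $x<\cont{x}{\Psi}{\omega,f}\oplus a$, so maximality gives $\Psi\omega f(\cont{x}{\Psi}{\omega,f}\oplus a)\in T_\theta$; thus $\Psi_{\omega,f,\cont{x}{\Psi}{\omega,f}}$ is total and $\Psi\omega fx=f\cont{x}{\Psi}{\omega,f}\Psi_{\omega,f,\cont{x}{\Psi}{\omega,f}}\in T_\theta$, contradicting $x\in S$. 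So the single missing idea is: use maximality together with compatibility to obtain totality of $\Psi_{\omega,f,x}$, then use totality of $\omega$ to obtain totality of $\cont{x}{\Psi}{\omega,f}$; condition (\ref{truncb}) enters only afterwards, to transfer maximality of $x$ to points lying above the truncated point.
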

\begin{proof}
We again appeal to Zorn's lemma, but this time on the set $S\subseteq T_\sigma$ given by
\begin{equation*}
S:=\{x\in T_\sigma \; | \; \mbox{either $\cont{x}{\Psi}{\omega,f}\notin T_\sigma$ or $\Psi\omega fx\notin T_\theta$}\}.
\end{equation*}
To show that $(S,<)$ is chain bounded in the usual sense, take some nonempty chain $\gamma\subseteq S$ and consider its upper bound $\ub{\gamma}\in L$ in the sense of Definition \ref{def-ccpartial}. As before, we want to show that $\ub{\gamma}\in S$, so we assume for contradiction that this is not the case, which means that both $\cont{\ub{\gamma}}{\Psi}{\omega,g}\in T_\sigma$ and $\Psi\omega f\ub{\gamma}\in T_\theta$. But then by Definition \ref{def-trunc} (\ref{trunca}) - observing that $\cont{\ub{\gamma}}{\Psi}{\omega,g}$ and $f\cont{\ub{\gamma}}{\Psi}{\omega,g}\Psi_{\omega,f,\cont{\ub{\gamma}}{\Psi}{\omega,g}}=\Psi\omega f \ub{\gamma}$ are both total - there exists some $d\in T_\delta$ such that $\cont{\ub{\gamma}}{\Psi}{\omega,f}=\cont{y}{\Psi}{\omega,f}$ for any $y\in T_\sigma$ satisfying $\comp{\ub{\gamma}}{d}=\comp{y}{d}$. But since by Definition \ref{def-ccpartial} there exists some $x\in\gamma$ such that $\comp{\ub{\gamma}}{d}=\comp{x}{d}$ it therefore follows that $\cont{\ub{\gamma}}{\Psi}{\omega,f}=\cont{x}{\Psi}{\omega,f}$ and thus
\begin{equation*}
\Psi\omega f\ub{\gamma}=f\cont{\ub{\gamma}}{\Psi}{\omega,f}\Psi_{\omega,f,\cont{\ub{\gamma}}{\Psi}{\omega,f}}=f\cont{x}{\Psi}{\omega,f}\Psi_{\omega,f,\cont{x}{\Psi}{\omega,f}}=\Psi\omega fx
\end{equation*} 
which imply that $\cont{x}{\Psi}{\omega,f}\in T_\sigma$ and $\Psi\omega fx\in T_\theta$ and thus $x\notin S$, a contradiction. Thus $\ub{\gamma}\in S$ and $S$ is chain bounded.

We now suppose that the conclusion of the main result is false, which means that there exists some $x\in T_\sigma$ such that $\Psi\omega fx\notin T_\theta$, and so in particular $x\in S$ and thus $S$ is nonempty. By Zorn's lemma, $S$ contains a maximal element $x$. We now show that $x\notin S$, a contradiction. Since $x$ is maximal, for any $x<y$ we must have $\cont{y}{\Psi}{\omega,f}\in T_\sigma$ and $\Psi \omega fy\in T_\theta$. 

We first show that $\Psi_{\omega,f,x}$ is total: For any $a\in T_\rho$, either $x\succ a$ and so $\Psi_{\omega,f,x}a=0_\theta\in T_\theta$, or $x\succ a$ and thus by compatibility we have $x<x\oplus a$ and therefore $\Psi_{\omega,f,x}a=\Psi\omega f(x\oplus a)\in T_\theta$. But then since $\omega$, $x$ and $\Psi_{\omega,f,x}$ are all total, it follows that $\cont{x}{\Psi}{\omega,f}=\omega x\Psi_{\omega,f,x}\in T_\sigma$. 

We now show that $\Psi_{\omega,f,\cont{x}{\Psi}{\omega,f}}$ is total: For $a\in T_\rho$, either $\cont{x}{\Psi}{\omega,f}\succ a$ and so $\Psi_{\omega,g,\cont{x}{\Psi}{\omega,f}}a=0_\theta\in T_\theta$, or $\cont{x}{\Psi}{\omega,f}\succ a$ and thus by compatibility we have $\cont{x}{\Psi}{\omega,f}<\cont{x}{\Psi}{\omega,f}\oplus a$. But now using condition (\ref{truncb}) of $\cont{\cdot}{\Psi}{\omega,f}$ forming a truncation, we have $x<\cont{x}{\Psi}{\omega,f}\oplus a$ and thus $\Psi \omega f(\cont{x}{\Psi}{\omega,f}\oplus a)\in T_\theta$. Now, since $f$, $\cont{x}{\Psi}{\omega,f}$ and $\Psi_{\omega,f,\cont{x}{\Psi}{\omega,f}}$ are all total, it follows that $\Psi\omega fx=f\cont{x}{\Psi}{\omega,f}\Psi_{\omega,f,\cont{x}{\Psi}{\omega,f}}\in T_{\theta}$.

We have therefore proven that if $x$ is maximal, then both $\cont{x}{\Psi}{\omega,f}$ and $\Psi\omega fx$ are total and so $x\notin S$, contradicting that $S$ has a maximal element. Therefore $S=\emptyset$ and so $\Psi\omega fx\in T_\theta$ for any $x\in T_\sigma$ and we have shown totality of $\Psi \omega f$.
\end{proof}
\begin{example}
\label{ex-controlledset}
We now consider Example \ref{ex-simpleset} from the perspective of controlled recursion, using a truncation similar to that given in Example \ref{ex-trunc} above. Let us extend the language of $\PAw$ with a new constant $\Omega$ with defining equation
\begin{equation}
\label{eqn-controlledex}
\Omega nfx=f\conte{x}{n}(\lambda a\; . \; \Omega nf(\conte{x}{n}\oplus a)\mbox{ if $a\succ \conte{x}{n}$ else $0$})
\end{equation}
where $n:\nat$, $f:\sigma\to (\rho\to\theta)\to\theta$ and $\conte{x}{n}$ is defined by
\begin{equation*}
\conte{x}{n}:=_{\nat\to\bool}\lambda i\; . \; x(i)\mbox{ if $i<n$ else 1}.
\end{equation*}
Then $\Omega$ is definable as $\Omega nfx:=\Psi(c n)fx$ where $\Psi$ satisfies (\ref{eqn-controlled}) and $c:\nat\to\sigma\to (\rho\to\theta)\to\sigma$ is the primitive recursive functional defined by
\begin{equation*}
cnxp:=\lambda i\; . \; x(i)\mbox{ if $i<n$ else 1}.
\end{equation*}
Working from now on in $\modpar$, for each $n\in\NN$ we can interpret $\Omega n$ as being a least fixed point of the equation (\ref{eqn-controlled}) for $\omega$ instantiated as the total representation in $\modpar$ of $cn$ as above. We apply Theorem \ref{thm-totcont} to show that $\Omega n$ is total. The compatibility and chain boundedness requirements are the same as in Theorem \ref{thm-simpletot}, and so in our setting have already been dealt with in Example \ref{ex-simpleset}. Take $f\in T_{\sigma\to (\rho\to\theta)\to \theta}$ with $\omega:=cn\in T_{\sigma\to (\rho\to\theta)\to\sigma}$. To see that $\cont{\cdot}{\Psi}{\omega,f}$ and $\lambda x\; . \; fx\Psi_{\omega,f,x}$ form a truncation, we use a similar argument to Example \ref{ex-trunc}. We can assume that $\omega=cn$ is interpreted in $\modpar$ as
\begin{equation*}
\omega xp(i)=\begin{cases}x(i) & \mbox{if $i<n$}\\ 1 & \mbox{otherwise}\end{cases}
\end{equation*}
and $\omega xp(\bot)=\bot$, and so for any strict $x$, if $y\in T_\sigma$ satisfies $\seq{x(0),\ldots,x(n-1)}=\seq{y(0),\ldots,y(n-1)}$ then $\omega xp=\omega yp=\omega yq$ for any functionals $p,q\in P_{\rho\to\theta}$, and so in particular $\cont{x}{\Psi}{\omega,f}=\omega x\Psi_{\omega,f,x}=\omega y\Psi_{\omega,f,y}=\cont{y}{\Psi}{\omega,f}$. This establishes property (\ref{trunca}), and property (\ref{truncb}) follows analogously to Example \ref{ex-trunc}. Therefore $\Psi \omega f=\Psi(cn)f\in T_{\sigma\to\theta}$ for arbitrary $n\in\NN$ and $f\in T_{\sigma\to (\rho\to\theta)\to\theta}$, which implies that the object $\Omega$ defined by $\Omega nfx:=\Psi(cn)fx$ is total and satisfies the equation (\ref{eqn-controlledex}) in $\modpar$. Therefore $\Omega$ also has an interpretation in $\modcont$, i.e. $\modcont\models \PAw+\Omega$. Note that no conditions were imposed on $\theta$, and so totality of $\Omega$ also holds for non-discrete $\theta$.
\end{example}
The functional defined in (\ref{eqn-controlledex}) is rather strongly controlled by $cn$ (we claim in fact that $\Omega$ is definable as a term of System T). This deliberately simplistic example was chosen simply to illustrate Theorem \ref{thm-totcont}. Nevertheless, later we will require a much more subtle truncation for realizing the functional interpretation of lexicographic induction, which certainly does lead us beyond the realm of primitive recursion.

%%%%%%%%%%%%%%%%%%%%%%%%%%%%%%%%%%%%%%%%%%%%%%%%%
%%%%%%%%%%%%%%%%%%%%%%%%%%%%%%%%%%%%%%%%%%%%%%%%%
\section{The functional interpretation of Zorn's lemma}
%%%%%%%%%%%%%%%%%%%%%%%%%%%%%%%%%%%%%%%%%%%%%%%%%
%%%%%%%%%%%%%%%%%%%%%%%%%%%%%%%%%%%%%%%%%%%%%%%%%
\label{sec-dial}

In the last section we introduced two general variants of recursion over chain bounded partial orders. We will now show that our controlled variant is well suited for solving the functional interpretation of our axiomatic formulation of Zorn's lemma from Section \ref{sec-syntactic}. We begin by recalling some essential facts about the functional interpretation. Full details can be found in \cite{AvFef(1998.0)} or \cite[Chapters 8 \& 10]{Kohlenbach(2008.0)}. For those readers not familiar with the functional interpretation, we aim to at least present, in a self contained manner, the concrete \emph{computational problem} we need to solve. This alone should suffice in order to understand later sections. Such a reader is advised to skip directly ahead to Section \ref{sec-dial-solve} (perhaps skimming through Section \ref{sec-dial-zorn} on the way).

%%%%%%%%%%%%%%%%%%%%%%%%%%%%%%%%%%%%%%%%%%%%%%%%%
\subsection{An overview of the functional interpretation}
%%%%%%%%%%%%%%%%%%%%%%%%%%%%%%%%%%%%%%%%%%%%%%%%%
\label{sec-dial-overview}

So that we can formally state a higher-type variant of G\"odel's soundness theorem for the functional interpretation, we need to recall the so-called \emph{weakly extensional} variant $\WPAw$ of $\PAw$, which is obtained from the latter by simply replacing the axiom of extensionality with a quantifier-free rule form (see \cite[Definition 3.12]{Kohlenbach(2008.0)} for details, though this is not necessary to understand what follows). This is because the interpretation is unable to deal with the axiom of extensionality and thus cannot be applied directly to $\PAw$ (see \cite[pp. 15]{AvFef(1998.0)} or \cite[pp. 126--127]{Kohlenbach(2008.0)}).  

The functional interpretation assigns to each formula $A$ of $\WPAw$ a new formula $\dt{A}{x}{y}$ where now $x$ and $y$ are (possibly empty) tuples of variables of some finite type. The precise definition is by induction over the logical structure of $A$, and is given in Figure 1, where in the interpretation of disjunction, $i$ is an object of natural number type and $P\vee_i Q$ denotes $(i= 0\to P)\wedge (i\neq 0\to Q)$. The basic functional interpretation applies only to intuitionistic theories. In order to deal with classical logic, we need to combine the interpretation with some variant of the negative translation $A\mapsto A^N$ as an initial step. We do not give any further details, but simply state the main soundness theorem for classical arithmetic. In the following, $\QFAC$ denotes the axiom of quantifier-free choice i.e. the schema
\begin{equation*}
\forall x^\rho\exists y^\sigma\; A_0(x,y)\to \exists f^{\rho\to\sigma}\forall x\; A_0(x,fx)
\end{equation*}
where $\rho$ and $\sigma$ are arbitrary types and $A_0(x,y)$ ranges over quantifier-free formulas.
% and in the interpretation of implication and universal quantification, $U$ and $Y$ are functionals which depend on certain other parameters.

\begin{figure}[t]
\label{fig-dt}
\begin{gather*}\dt{A}{}{}:\equiv A\mbox{ for $A$ atomic} \ \ \ \dt{A\wedge B}{x,u}{y,v}:\equiv \dt{A}{x}{y}\wedge\dt{B}{u}{v}\\
\dt{A\vee B}{i,x,u}{y,v}:\equiv \dt{A}{x}{y}\vee_i\dt{B}{u}{v}\\ \dt{A\to B}{U,Y}{x,v}:\equiv \dt{A}{x}{Yxv}\to \dt{B}{Ux}{v}\\
\dt{\exists z A(z)}{x,u}{v}:\equiv \dt{A(x)}{u}{v} \ \ \ \dt{\forall z A(z)}{U}{x,v}:\equiv \dt{A(x)}{Ux}{v}
\end{gather*}\caption{The functional interpretation}\end{figure}
\begin{theorem}[cf. Theorem 10.7 of \cite{Kohlenbach(2008.0)}, but essentially due to G\"odel \cite{Goedel(1958.0)}]
\label{thm-dialectica}
Let $A$ be a formula in the language of $\WPAw$. Then whenever
\begin{equation*}
\WPAw+\QFAC\vdash A
\end{equation*}
we can extract a term $t$ of $\WHAw$ whose free variables are the same as those of $A$, and such that
\begin{equation*}
\HAw\vdash \forall y\dt{A^N}{t}{y}.
\end{equation*}
\end{theorem}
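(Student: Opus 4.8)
The plan is to factor the argument into its two classical ingredients — a negative translation, followed by G\"odel's interpretation proper — and to establish soundness of each piece by induction on derivations.

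First I would fix a negative translation $A\mapsto A^N$ (for instance Kuroda's, which merely inserts a double negation in front of the whole formula and after each universal quantifier) and verify the soundness statement: $\WPAw+\QFAC\vdash A$ implies $\WHAw+\mathrm{MP}^\omega+\QFAC\vdash A^N$, where $\mathrm{MP}^\omega$ denotes the higher-type Markov principle. This is routine bookkeeping over the derivation: the translations of the logical axioms and rules are intuitionistically derivable; the induction axiom translates to an instance of induction on the formula $A^N$; the defining axioms for the System T constants, together with the equality axioms, are essentially $\forall$-quantified quantifier-free statements whose translations are harmless; and the translation of $\QFAC$ follows from $\QFAC$ and $\mathrm{MP}^\omega$, the latter being needed to commute a double negation past the existential quantifier in the premise (the quantifier-free matrix itself being stable).

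Second, I would prove soundness of the functional interpretation itself for the intuitionistic system $\WHAw+\mathrm{MP}^\omega+\QFAC$: for every formula $B$ provable there, one extracts System T terms $t$, with free variables among those of $B$, such that $\WHAw\vdash\forall y\,\dt{B}{t}{y}$. Again this goes by induction on the derivation, reading the witnessing terms off Figure~1. The purely logical combinator axioms receive explicit term witnesses; the induction axiom is witnessed by the System T recursor, which is the one point where the recursion of System T is genuinely used; and both $\mathrm{MP}^\omega$ and $\QFAC$ receive essentially trivial interpretations, the latter because a choice functional already occurs literally in the interpreted formula. Instantiating with $B:\equiv A^N$ — whose free variables coincide with those of $A$ — and weakening the conclusion from $\WHAw$ to $\HAw$, which proves strictly more, yields the term $t$ with $\HAw\vdash\forall y\,\dt{A^N}{t}{y}$.

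The main obstacle is the contraction axiom $B\to B\wedge B$ in the second step. Its interpretation requires producing, from a single candidate counterexample functional, a choice between two potential witnesses, which is possible only because the quantifier-free matrix $\dt{B}{x}{y}$ is decidable and can therefore be case-distinguished by a System T term. This same reliance on decidability is why the theorem must be stated over the \emph{weakly} extensional calculus $\WPAw$ rather than $\PAw$ itself: the functional interpretation of the full extensionality axiom cannot be witnessed by a System T term — it would amount to deciding equality of higher-type functionals — so one is forced to route through $\WPAw$ and its quantifier-free extensionality rule.
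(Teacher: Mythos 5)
You give no gap here: the paper itself does not prove this theorem but quotes it as a known result (deferring to Theorem 10.7 of Kohlenbach and ultimately to G\"odel), and your two-stage argument -- Kuroda negative translation landing in $\WHAw+\QFAC+\mathrm{MP}^\omega$, then soundness of the functional interpretation by induction on derivations, with induction witnessed by the recursor, contraction handled via decidability of the quantifier-free matrix, and extensionality avoided by working over the weakly extensional calculus -- is exactly the standard proof in that cited source, so your proposal is correct and takes essentially the same route the paper relies on. The only minor quibble, which does not affect the argument since the theorem is stated for $\WPAw$, is that the obstruction to interpreting the full extensionality axiom is best described not as having to decide higher-type equality but as the non-existence of a suitable (computable, continuous or majorizable) modulus-of-extensionality functional witnessing its interpretation.
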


\noindent\textbf{Remark.} Note that $\forall y\dt{A^N}{t}{y}$ is provable even in a quantifier-free fragment of $\WHAw$, the intuitionistic variant of $\WPAw$.\smallskip

The main result in the remainder of this section is to extend Theorem \ref{thm-dialectica} above to include our formulation of Zorn's lemma. Generally speaking, in order to expand the soundness theorem to incorporate extensions of $\WPAw+\QFAC$ with new axioms $X$, it suffices to provide a new recursive scheme $\Omega$ such that the functional interpretation of $X^N$ has a solution in $\HAw+\Omega$. A classical example of this is with $X$ as the axiom of countable choice, and $\Omega$ the scheme of bar recursion in all finite types (cf. \cite[Chapter 6]{AvFef(1998.0)} or \cite[Chapter 11]{Kohlenbach(2008.0)}). Here on the other hand, we set $X$ to be our syntactic formulation of Zorn's lemma, and $\Omega$ a functional definable from our scheme of controlled recursion from Section \ref{sec-rec-controlled}.
%
%%%%%%%%%%%%%%%%%%%%%%%%%%%%%%%%%%%%%%%%%%%%%%%%%
\subsection{The functional interpretation of $\ZL_{\comp{}{},\oplus,\prec}$}
%%%%%%%%%%%%%%%%%%%%%%%%%%%%%%%%%%%%%%%%%%%%%%%%%
\label{sec-dial-zorn}

We now outline how the combination of the functional interpretation with the negative interpretation acts on the axiom $\ZL_{\comp{}{},\oplus,\prec}$ as given in Definition \ref{def-paopen}, subject to the additional restriction that $Q(u)$ ranges over \emph{quantifier-free} formulas of $\WPAw$ (similar restrictions can be found in \cite{Berger(2004.0)} and \cite{Coquand(1991.0)} in the context of open induction). This restriction still allows us to deal with most concrete examples we are interested in (including the existence of maximal ideals in countable commutative rings in Example \ref{ex-countableideal} and also Higman's lemma, which we will discuss later), but simplifies the interpretation considerably (though we conjecture that in many cases, and in particular for concrete example considered in Section \ref{sec-lex}, a solution for general $Q(u)$ can be reduced to that of quantifier-free $Q(u)$, subject to modification of the parameters $\comp{}{},\oplus,\prec$). 

In what follows, we make use of the fact that the quantifier-free formulas of $\WPAw$ are decidable, in the sense that whenever $A_0(x_1,\ldots,x_n)$ is quantifier-free with free variables $x_1,\ldots,x_n$ there is a closed term $t_A$ of System T so that $\HAw\vdash\forall x_1,\ldots,x_n(t_Ax_1\ldots x_n=1\leftrightarrow A_0(x_1,\ldots,x_n))$. This also means that the functional interpretation essentially interprets quantifier-free formulas as themselves.

Let us now fix closed terms $\comp{}{},\oplus,\prec$ and consider $\ZL_{\comp{}{},\oplus,\prec}$ as given in Definition \ref{def-paopen}, but where now $Q(u)$ is assumed to be quantifier-free. There are several variants of the negative translation which can be applied. Applying standard variant due to Kuroda, as used in \cite[Chapter 10]{Kohlenbach(2008.0)}, and using a few standard intuitionistic laws together with Markov's principle (all of which can be interpreted by the intuitionistic functional interpretation), it suffices to solve the functional interpretation of
\begin{equation}
\label{eqn-nzorn}
\begin{aligned}
\exists x^\sigma\forall d^\delta Q(\comp{x}{d})\to &\neg\neg\exists y^\sigma(\forall d\; Q(\comp{y}{d})\\
&\wedge \forall a\succ y\;\exists d\; \neg Q(\comp{y\oplus a}{d})).
\end{aligned}
\end{equation}
We must now apply the rules of Figure 1 to (\ref{eqn-nzorn}), which we do step by step. We first observe that the inner part of the conclusion of (\ref{eqn-nzorn}) within the double negations is translated to
\begin{equation*}
\exists y^\sigma, h^{\rho\to\delta}\forall d, a(Q(\comp{y}{d})\wedge (a\succ y\to \neg Q(\comp{y\oplus a}{ha})).
\end{equation*}
Therefore the double negated conclusion is partially interpreted (i.e. before the final instance of the $\forall$-rule) as
\begin{equation*}
\forall F,G\; \exists y,h(Q(\comp{y}{Fyh})\wedge (Gyh\succ y\to \neg Q(\comp{y\oplus Gyh}{h(Gyh)}))
\end{equation*}
where here $F:\sigma\to (\rho\to \delta)\to\delta$ and $G:\sigma\to (\rho\to\delta)\to \rho$. Therefore, interpreting the main implication, in order to solve the functional interpretation of (\ref{eqn-nzorn}) we must produce three terms $r,s,t$ which take as input $x,F,G$ and have output types $\delta,\sigma$ and $\rho\to\sigma$ respectively, and satisfy
\begin{equation}
\label{eqn-ndzorn}
Q(\comp{x}{r})\to Q(\comp{s}{Fst})\wedge (Gst\succ s\to \neg Q(\comp{s\oplus Gst}{t(Gst)}))
\end{equation}
where for readability we suppress the input parameters, so that $r$ should actually read $rxFG$ throughout, and similarly for $s$ and $t$. Though (\ref{eqn-ndzorn}) looks complicated, it can be given a fairly intuitive characterisation as follows. 

The original axiomatic formulation of Zorn's lemma is equivalent (using $\QFAC$) to the statement that given some $x^\sigma$ satisfying $\forall dQ(\comp{x}{d})$ we can find some $y$ also satisfying $\forall d Q(\comp{y}{d})$ together with an $h:\rho\to\delta$ witnessing maximality of $y$ in the sense that $\neg Q(\comp{y\oplus a}{ha})$ for any $a\succ x$. On the other hand, the computational interpretation of Zorn's lemma given as (\ref{eqn-ndzorn}) says that for any $x^\sigma$ together with `counterexample functionals' $F,G$ we can produce elements $s$ and $t$ (in terms of $x,F,G$), where $s$ \emph{approximates} our maximal element $y$ in the sense that it satisfies $Q(\comp{s}{d})$ not for all $d$ but just for $d:=Fst$, while $t$ approximates $h$ in the sense that it satisfies $\neg Q(\comp{s\oplus a}{t(a)}$ not for all $a\succ s$ but just for $a:=Gst$ whenever $Gst\succ s$. Indeed, this can be seen as a slightly more intricate version of Kreisel's no-counterexample interpretation, and the relationship between $\ZL_{\comp{}{},\oplus,\prec}$ and (\ref{eqn-ndzorn}) is similar to the relationship between Cauchy convergence and `metastability' (see \cite[Section 2.3]{Kohlenbach(2008.0)}).

%%%%%%%%%%%%%%%%%%%%%%%%%%%%%%%%%%%%%%%%%%%%%%%%%
\subsection{Solving the functional interpretation of $\ZL_{\comp{}{},\oplus,\prec}$}
%%%%%%%%%%%%%%%%%%%%%%%%%%%%%%%%%%%%%%%%%%%%%%%%%
\label{sec-dial-solve}

From this point onwards, we no longer need to deal directly with the functional interpretation. Rather, our focus is on solving the functional interpretation of $\ZL_{\comp{}{},\oplus,\prec}$ as given in (\ref{eqn-ndzorn}). To be more precise, we will construct realizing terms $r$, $s$ and $t$ which each take as input $x:\sigma$, $F:\sigma\to (\rho\to\delta)\to\delta$ and $G:\sigma\to (\rho\to\delta)\to\rho$ and satisfy,
\begin{equation*}
Q(\comp{x}{r})\to Q(\comp{s}{Fst})\wedge C(G,s,t)
\end{equation*}
for any input, where $C(G,y,h)$ abbreviates the formula
\begin{equation*}
C(G,y,h):\equiv Gyh\succ y\to \neg Q(\comp{y\oplus Gyh}{h(Gyh)}).
\end{equation*}
Interestingly, we do not require $\ZL_{\comp{}{},\oplus,\prec}$ in order to verify our realizing terms. Instead, we work in $\HAw$ extended with two recursively defined constants together with a simple universal axiom which we label `relevant part'. That this formal theory has a model is a separate question, which we discuss after presenting our main result (Theorem \ref{thm-mainver}).
\begin{definition}[$\HAw$]
Let $t_C$ denote the term of System T satisfying $t_CGyh=1\leftrightarrow C(G,y,h)$, which exists since $\prec$ is decidable and $Q(u)$ is quantifier-free.
\end{definition}
For the remainder of this section, we fix some closed term $e:(\sigma\to (\rho\to\delta)\to\delta)\to (\sigma\to (\rho\to\delta)\to\sigma)$ of System T, so that all definitions and results that follows are parametrised by $e$.
\begin{definition}[$\HAw$]
\label{def-self}
Define the new constant $\Omega_e:(\sigma\to (\rho\to\delta)\to\delta)\to\sigma\to\delta$ by
\begin{equation}
\label{eqn-omegabig}
\Omega_eFx=F\contd{x}{\Omega_e}{F}(\lambda a\; . \; \Omega_eF(\contd{x}{\Omega_e}{F}\oplus a)\mbox{ if $a\succ \contd{x}{\Omega_e}{F}$ else $0_\delta$})
\end{equation}
where $\contd{x}{\Omega_e}{F}$ is shorthand for
\begin{equation*}
\contd{x}{\Omega_e}{F}:=eFx(\lambda a\;. \; \Omega_eF(x\oplus a)\mbox{ if $a\succ x$ else $0_\delta$})
\end{equation*}
Furthermore, we use the abbreviation
\begin{equation*}
\Omega_{e,F,x}:=\lambda a\; . \; \Omega_eF(x\oplus a)\mbox{ if $a\succ x$ else $0_\delta$}
\end{equation*}
so that (\ref{eqn-omegabig}) can be expressed as $\Omega_e Fx=F\contd{x}{\Omega_e}{F}\Omega_{e,F,\contd{x}{\Omega_e}{F}}$ for $\contd{x}{\Omega_e}{F}=eFx\Omega_{e,F,x}$. 
\end{definition}

\begin{definition}[$\HAw+\Omega_e$]
We define in the language of $\HAw+\Omega_e$ the `relevant part' axiom for $\Omega_e$ as
\begin{equation*}
\SR_e\; : \; \forall x,F(\comp{x}{\Omega_eFx}=\comp{\contd{x}{\Omega_e}{F}}{\Omega_eFx}).
\end{equation*}
\end{definition}
Intuitively, the relevant part axiom says that if we take the approximation of $x$ of size $\Omega_eFx$, then actually this approximation has no more information than that of the truncated version $\contd{x}{\Omega_e}{F}$ of $x$, and so the latter already contains the `relevant part' of this approximation. We will see a natural example of an $e$ which satisfies this axiom in Section \ref{sec-lex}.

\begin{definition}[$\HAw+\Omega_e$]
Define the constant $\Gamma_e:(\sigma\to (\rho\to\delta)\to \delta)\to (\sigma\to(\rho\to\delta)\to\rho)\to\sigma\to\sigma^\ast$ in the language of $\HAw+\Omega_e$ by 
\begin{equation}
\label{eqn-gamma}
\Gamma_eFGx:=y::\begin{cases}\seq{} & \mbox{if $t_CGy\Omega_{e,F,y}=1$}\\ \Gamma_eFG(y\oplus Gy\Omega_{e,F,y}) & \mbox{otherwise}\end{cases}
\end{equation}
for $y:=\contd{x}{\Omega_e}{F}$, where here $y::l$ denotes the appending of $y$ to the front of the list $l$. i.e. $y::[l_1,\ldots,l_{j-1}]:=[y,l_1,\ldots,l_{j-1}]$.
\end{definition}
\begin{theorem}[$\HAw+\Omega_e+\Gamma_e+\SR_e$]
\label{thm-mainver}
Define terms $r$, $s$ and $t$ as follows:
\begin{equation*}
\begin{aligned}
rxFG&:=_\delta\Omega_{e}Fx\\
sxFG&:=_\sigma\tail(\Gamma_eFGx)\\
txFG&:=_{\rho\to\delta}\Omega_{e,F,\tail(\Gamma_eFGx)}
\end{aligned}
\end{equation*}
where $\tail(l)$ denotes the last element of the list $l$ (and $\tail([])=0_\sigma$). Then provably in $\HAw+\Omega_e+\Gamma_e+\SR_e$ we have
\begin{equation}
\label{eqn-goal}
\forall x,F,G(Q(\comp{x}{r})\to Q(\comp{s}{Fst})\wedge C(G,s,t))
\end{equation}
where in the above formula we write just $r$ instead of $rxFG$, and similarly for $s$ and $t$.
\end{theorem}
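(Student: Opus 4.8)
The plan is to unwind the recursion in $\Gamma_e$ to produce an explicit finite sequence $y_0, y_1, \ldots, y_k$ of "truncated points" visited during the computation, and then read off the required properties by inspecting the termination condition of $\Gamma_e$. Concretely, suppose $Q(\comp{x}{r})$ holds, where $r = \Omega_e F x$. Let $y_0 := \contd{x}{\Omega_e}{F}$. By the defining equation $(\ref{eqn-gamma})$, $\Gamma_e F G x = y_0 :: (\text{something})$, and the recursion continues with $x$ replaced by $y_0 \oplus G y_0 \Omega_{e,F,y_0}$ unless $t_C G y_0 \Omega_{e,F,y_0} = 1$, i.e. unless $C(G, y_0, \Omega_{e,F,y_0})$ holds. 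Iterating, we get a sequence $y_0, \ldots, y_k$ where $y_{i+1} = \contd{(y_i \oplus G y_i \Omega_{e,F,y_i})}{\Omega_e}{F}$, the recursion halts precisely because $C(G, y_k, \Omega_{e,F,y_k})$ holds, and $s = \tail(\Gamma_e F G x) = y_k$, while $t = \Omega_{e,F,y_k}$. The termination of this unwinding is exactly what totality of $\Gamma_e$ (guaranteed in the ambient theory $\HAw + \Omega_e + \Gamma_e$, where $\Gamma_e$ is a constant satisfying its defining axiom) provides; formally one argues by induction on the length of the list $\Gamma_e F G x$, or simply observes that $\Gamma_e F G x$ is some concrete list and peels off its head.

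Once this structure is in place, the two conjuncts fall out. For the second conjunct $C(G, s, t) = C(G, y_k, \Omega_{e,F,y_k})$: this is literally the condition under which the $\Gamma_e$-recursion terminated at step $k$, so $t_C G y_k \Omega_{e,F,y_k} = 1$, which by definition of $t_C$ means $C(G, y_k, \Omega_{e,F,y_k})$ — done, and note this conjunct needs nothing about $Q$. For the first conjunct $Q(\comp{s}{Fst})$: here is where $\SR_e$ enters. We have $Fst = F y_k \Omega_{e,F,y_k}$. Now observe that $y_k = \contd{z}{\Omega_e}{F}$ for $z := y_{k-1} \oplus G y_{k-1} \Omega_{e,F,y_{k-1}}$ (or $z := x$ if $k = 0$), and by the defining equation $(\ref{eqn-omegabig})$ for $\Omega_e$ we have $\Omega_e F z = F \contd{z}{\Omega_e}{F} \Omega_{e,F,\contd{z}{\Omega_e}{F}} = F y_k \Omega_{e,F,y_k} = Fst$. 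So $\comp{s}{Fst} = \comp{y_k}{\Omega_e F z} = \comp{\contd{z}{\Omega_e}{F}}{\Omega_e F z}$, and by $\SR_e$ (applied at $z$) this equals $\comp{z}{\Omega_e F z}$. It remains to propagate $Q(\comp{x}{\Omega_e F x})$ down the chain: one shows inductively that $Q(\comp{y_i}{\Omega_e F z_i})$ holds at each stage, where $z_i$ is the input at stage $i$, using $\SR_e$ to rewrite $\comp{y_i}{\cdot}$ back to $\comp{z_i}{\cdot}$, and using that $z_{i+1} = y_i \oplus G y_i \Omega_{e,F,y_i}$ together with the hypothesis that the recursion did *not* terminate at stage $i$ — i.e. $t_C G y_i \Omega_{e,F,y_i} \neq 1$, so $\neg C(G, y_i, \Omega_{e,F,y_i})$, meaning $G y_i \Omega_{e,F,y_i} \succ y_i$ and $Q(\comp{y_i \oplus G y_i \Omega_{e,F,y_i}}{\Omega_{e,F,y_i}(G y_i \Omega_{e,F,y_i})})$; but $\Omega_{e,F,y_i}(G y_i \Omega_{e,F,y_i}) = \Omega_e F(y_i \oplus G y_i \Omega_{e,F,y_i}) = \Omega_e F z_{i+1}$, so this is precisely $Q(\comp{z_{i+1}}{\Omega_e F z_{i+1}})$, closing the induction.

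The main obstacle is bookkeeping rather than conceptual: one must carefully set up the induction along the $\Gamma_e$-recursion so that at each stage the "certificate" being carried is exactly $Q(\comp{z_i}{\Omega_e F z_i})$ and verify that the single step correctly converts this into the certificate for $z_{i+1}$ via the $\Omega_e$-equation and the non-termination of $t_C$. A clean way to organize this is to prove, by induction on the length of $\Gamma_e F G x'$ (as a statement universally quantified over $x'$), the auxiliary claim
\begin{equation*}
Q(\comp{x'}{\Omega_e F x'}) \to Q(\comp{\tail(\Gamma_e F G x')}{F\,\tail(\Gamma_e F G x')\,\Omega_{e,F,\tail(\Gamma_e F G x')}}) \wedge C(G, \tail(\Gamma_e F G x'), \Omega_{e,F,\tail(\Gamma_e F G x')});
\end{equation*}
the base case is when $t_C G y \Omega_{e,F,y} = 1$ for $y := \contd{x'}{\Omega_e}{F}$ (so $\Gamma_e F G x' = [y]$, $\tail = y$), handled directly by $\SR_e$ and the $\Omega_e$-equation as above, and the step case applies the induction hypothesis to $y \oplus G y \Omega_{e,F,y}$ after checking its $\Omega_e$-certificate holds. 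Instantiating the auxiliary claim at $x' := x$ and recalling $r = \Omega_e F x$ gives $(\ref{eqn-goal})$. Everything here is a finite combination of the three defining/axiomatic equations, so it all goes through in $\HAw + \Omega_e + \Gamma_e + \SR_e$ with no appeal to $\ZL_{\comp{}{},\oplus,\prec}$ itself, consistent with the remark preceding the theorem.
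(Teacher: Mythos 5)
Your proposal is correct and takes essentially the same route as the paper's own proof: both argue by induction on the length of $\Gamma_eFGx$ (universally quantified over the starting point), with the base case $|\Gamma_eFGx|=1$ handled by $\SR_e$ together with the defining equation of $\Omega_e$, and the step case obtaining the certificate $Q(\comp{y\oplus a}{\Omega_eF(y\oplus a)})$ from $\neg C(G,y,\Omega_{e,F,y})$ (with $a:=Gy\Omega_{e,F,y}$) and feeding it into the induction hypothesis at $y\oplus a$. There are no substantive differences beyond presentation.
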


\begin{proof}
Fixing $F$ and $G$, we prove by induction on $n$ that
\begin{equation}
\label{eqn-ind}
\begin{aligned}
\forall x(|\Gamma_eFGx|=n&\wedge Q(\comp{x}{rx})\\
&\to Q(\comp{sx}{F(sx)(tx)})\wedge C(G,sx,tx))
\end{aligned}
\end{equation}
where here $rx$ is shorthand for $rxFG$ (i.e. the parameter $x$ is now explicitly written since it varies in the induction). Since $|\Gamma_eFGx|\geq 1$, our base case is $n=1$ which means that $t_CGy\Omega_{e,F,y}=1$ and $\Gamma_eFGx=\seq{y}$ for $y:=\contd{x}{\Omega_e}{F}$. But this implies that $sx=y$ and $tx=\Omega_{e,F,y}$, and thus in particular $C(G,sx,tx)$ holds. Next, we observe that
\begin{equation*}
\comp{x}{rx}=\comp{x}{\Omega_eFx}\stackrel{(a)}{=}\comp{sx}{\Omega_eFx}\stackrel{(b)}{=}\comp{sx}{F(sx)(tx)}
\end{equation*}
where (a) follows from $\SR_e$ and the definitions of $rx$ and $sx$, while for (b) we use that
\begin{equation*}
\Omega_eFx=Fy\Omega_{e,F,y}=F(sx)(tx).
\end{equation*}
Thus from $Q(\comp{x}{rx})$ we can infer $Q(\comp{sx}{F(sx)(tx)})$, which establishes (\ref{eqn-ind}) for $n=1$.

For the induction step, suppose that $|\Gamma_eFGx|=n+1$, which implies that $t_CGy\Omega_{e,F,y}=0$. Setting $y:=\contd{x}{\Omega_e}{F}$ as before, and in addition $a:=Gy\Omega_{e,F,y}$, by unwinding definitions it follows from $\neg C(G,y,\Omega_{e,F,y})$ that
\begin{enumerate}[(i)]

\item $a\succ y$ and thus $\Omega_{e,F,y}(a)=\Omega_eF(y\oplus a)$,

\item $Q(\comp{y\oplus a}{\Omega_{e,F,y}(a)})$ and thus $Q(\comp{y\oplus a}{\Omega_eF(y\oplus a)})$ by (i).

\end{enumerate}
Now since $\Gamma_eFGx=y::\Gamma_eFG(y\oplus a)$ and thus $|\Gamma_eFG(y\oplus a)|=n$, we can apply the induction hypothesis for $x':=y\oplus a$. Since $rx'=\Omega_eFx'=\Omega_eF(y\oplus a)$ it follows from (ii) that $Q(\comp{x'}{rx'})$ and therefore we have $Q(\comp{sx'}{F(sx')(tx')})$ and $C(G,sx',tx')$. But since
\begin{equation*}
sx=\tail(y::\Gamma_eFG(y\oplus a))=\tail(\Gamma_eFG(x'))=sx'
\end{equation*}
and similarly $tx=tx'$, it follows that $Q(\comp{sx}{F(sx)(tx)})\wedge C(G,sx,tx)$, which establishes (\ref{eqn-ind}) for $n'=n+1$. This completes the induction, and (\ref{eqn-goal}) follows by taking some arbitrary $F,G,x$ and letting $n:=|\Gamma_eFGx|$ in (\ref{eqn-ind}).
\end{proof}
The above result which solves the functional interpretation of $\ZL_{\comp{}{},\oplus,\prec}$ is valid for arbitrary $e$. However, it is only useful if the theory $\HAw+\Omega_e+\Gamma_e+\SR_e$ has a reasonable model. The final results of this section establish some conditions by which both $\Omega_e$ and $\Gamma_e$ give rise to total objects and hence exist in $\modcont$. An example of a setting where $\SR_e$ is also valid in $\modcont$ is given in Section \ref{sec-lex}.
\begin{theorem}[$\modpar$]
\label{thm-maintot}
Let $\Omega_e$ denote a fixed point of the primitive recursive defining equation (\ref{eqn-omegabig}) - where the closed primitive recursive term $e$ is interpreted as some total object in $\modpar$ - and suppose that there exist $<$ and $L$ such that $<$ is compatible with $(\oplus,\prec)$ and chain bounded w.r.t. $\comp{\cdot}{}$ and $L$. Suppose in addition that $\contd{\cdot}{\Omega_e}{F}\in P_{\sigma\to\sigma}$ and $\lambda x\; . \; Fx\Omega_{e,F,x}\in P_{\sigma\to\delta}$ form a truncation w.r.t. $\comp{\cdot}{}$, $L$ and $<$ for any total $F$. Then $\Omega_e$ is total.
\end{theorem}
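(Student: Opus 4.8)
The plan is to obtain this as essentially a corollary of Theorem~\ref{thm-totcont}, by recognising the family $(\Omega_e F)_F$ as a uniform collection of controlled recursors indexed by the parameter $F$. Fix an arbitrary total $F\in T_{\sigma\to(\rho\to\delta)\to\delta}$. The first, and only substantive, step is to identify the slice $\Omega_e F\in P_{\sigma\to\delta}$ with the least fixed point in $\modpar$ of the controlled recursion equation (\ref{eqn-controlled}) under the instantiation $\theta:=\delta$, $f:=F$, $\omega:=eF$. Write $E$ for the primitive recursive functional underlying (\ref{eqn-omegabig}), so that (as usual) $\Omega_e=\bigsqcup_{n\in\NN}E^n(\bot)$, and $E_F$ for the functional underlying (\ref{eqn-controlled}) under the above instantiation. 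Inspecting (\ref{eqn-omegabig}) one sees that $\Omega_e$ occurs there only in the form $\Omega_e F(\cdots)$; hence $E(\Phi)(F)$ depends on $\Phi$ only through the slice $\Phi F$, i.e. $E(\Phi)(F)=E_F(\Phi F)$. A trivial induction on $n$ then gives $E^n(\bot)(F)=E_F^n(\bot)$, and since evaluation at $F$ is Scott continuous (lubs in function domains being pointwise) it follows that $\Omega_e F=\bigsqcup_n E^n(\bot)(F)=\bigsqcup_n E_F^n(\bot)$, which is precisely the least fixed point of $E_F$. Under this identification the abbreviations line up: $\cont{x}{\Psi}{\omega,f}=\contd{x}{\Omega_e}{F}$ and $\Psi_{\omega,f,x}=\Omega_{e,F,x}$.

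It then remains to verify the hypotheses of Theorem~\ref{thm-totcont} for this instantiation. Compatibility of $<$ with $(\oplus,\prec)$ and chain boundedness w.r.t. $\comp{\cdot}{}$ and $L$ are assumed outright. The functional $f:=F$ is total by hypothesis, and $\omega:=eF$ is total because $e$ is interpreted by a total object of $\modpar$ and $F$ is total, so $eF\in T_{\sigma\to(\rho\to\delta)\to\sigma}$. Finally, the requirement of Theorem~\ref{thm-totcont} that $\cont{\cdot}{\Psi}{\omega,f}$ and $\lambda x\,.\,fx\Psi_{\omega,f,x}$ form a truncation w.r.t. $\comp{\cdot}{}$, $L$ and $<$ becomes, via the identification above, exactly the assumed statement that $\contd{\cdot}{\Omega_e}{F}$ and $\lambda x\,.\,Fx\Omega_{e,F,x}$ form a truncation. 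Theorem~\ref{thm-totcont} therefore yields $\Omega_e F\in T_{\sigma\to\delta}$. Since $F\in T_{\sigma\to(\rho\to\delta)\to\delta}$ was arbitrary, this is exactly the assertion that $\Omega_e$ is total.

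The only step that is not pure bookkeeping is the first one: checking that slicing the canonical fixed point of (\ref{eqn-omegabig}) at a fixed $F$ returns the canonical fixed point of the corresponding sliced equation. This is a routine consequence of the sliceability of $E$ in its $F$ argument together with continuity of evaluation, and I expect it to be the only point needing any care. Alternatively, one could bypass the reduction and simply re-run the Zorn's lemma argument underlying Theorem~\ref{thm-totcont} directly for $\Omega_e F$ with $\theta:=\delta$, $f:=F$, $\omega:=eF$; this sidesteps the slicing observation at the cost of repeating a proof already given, so the reduction route seems preferable.
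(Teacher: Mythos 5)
Your proposal is correct and follows essentially the same route as the paper, which reduces the claim to Theorem \ref{thm-totcont} by setting $\Omega_e:=\lambda F\,.\,\Psi(eF)F$, using totality of $e$ to get $eF$ total and observing that the truncation hypothesis is exactly the one required there. The only difference is presentational: the paper simply takes $\lambda F\,.\,\Psi(eF)F$ as the fixed point in question, whereas you additionally verify (correctly, via sliceability of the defining operator and continuity of evaluation) that the canonical least fixed point of (\ref{eqn-omegabig}) sliced at $F$ coincides with the instantiated controlled recursor, which is harmless extra care rather than a divergence in method.
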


\begin{proof}
This is a simple adaptation of Theorem \ref{thm-totcont}, taking $\Omega_e:=\lambda F\; . \; \Psi(eF)F$. If $F\in T_{\sigma\to (\rho\to\delta)\to\delta}$ then also $eF\in T_{\sigma\to (\rho\to\delta)\to\sigma}$ by totality of $e$, and thus whenever $\cont{\cdot}{\Psi}{eF,F}$ and $\lambda x\; . \; Fx\Psi_{eF,F,x}$ form a truncation w.r.t. $\comp{\cdot}{}$, $L$ and $<$ then $\Omega_eF=\Psi(eF)F\in T_{\sigma\to\delta}$. But the truncation condition is exactly that given as the statement of this theorem, and if this holds for arbitrary total $F$ then $\Omega_e$ is also total.
\end{proof}

\begin{theorem}[$\modpar$]
\label{thm-maintots}
Let $\Gamma_e$ denote a fixed point of the defining equation (\ref{eqn-gamma}). Under the assumptions of Theorem \ref{thm-maintot}, $\Gamma_e$ is total.
\end{theorem}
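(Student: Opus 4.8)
The plan is to mirror the totality argument for $\Omega_e$ (Theorem \ref{thm-maintot}), but now reason about the recursion defining $\Gamma_e$ rather than $\Psi$. The key observation is that, under the assumptions of Theorem \ref{thm-maintot}, we already know $\Omega_e$ is total; hence for any total $F$ the truncation $\contd{\cdot}{\Omega_e}{F}\in P_{\sigma\to\sigma}$ together with $\lambda x.\,Fx\Omega_{e,F,x}$ form a truncation w.r.t. $\comp{\cdot}{}$, $L$ and $<$. I would then show that $\Gamma_e F G x\in T_{\sigma^\ast}$ for all total $F$, $G$, $x$ by an application of Zorn's lemma on the set
\begin{equation*}
S:=\{x\in T_\sigma \; | \; \Gamma_eFGx\notin T_{\sigma^\ast}\},
\end{equation*}
for fixed total $F$, $G$, essentially following the template of Theorem \ref{thm-totcont}.

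First I would check that $(S,<)$ is chain bounded in the usual sense. Given a nonempty chain $\gamma\subseteq S$, take its upper bound $\ub{\gamma}\in L$ from Definition \ref{def-ccpartial}. Suppose for contradiction $\Gamma_eFG\ub{\gamma}\in T_{\sigma^\ast}$. Since $\Gamma_eFG\ub{\gamma}$ is defined in terms of $y:=\contd{\ub{\gamma}}{\Omega_e}{F}$, and since both $\contd{\ub{\gamma}}{\Omega_e}{F}$ and $\Omega_eF\ub{\gamma}=Fy\Omega_{e,F,y}$ are total (by totality of $\Omega_e$, $e$, $F$), condition (\ref{trunca}) of the truncation gives some $d\in T_\delta$ with $\comp{\ub{\gamma}}{d}=\comp{y'}{d}\Rightarrow \contd{\ub{\gamma}}{\Omega_e}{F}=\contd{y'}{\Omega_e}{F}$ for all total $y'$. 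Picking $x\in\gamma$ with $\comp{\ub{\gamma}}{d}=\comp{x}{d}$, we get $\contd{\ub{\gamma}}{\Omega_e}{F}=\contd{x}{\Omega_e}{F}$ and hence $\Omega_{e,F,\contd{\ub{\gamma}}{\Omega_e}{F}}=\Omega_{e,F,\contd{x}{\Omega_e}{F}}$, so the two unfoldings of the defining equation (\ref{eqn-gamma}) coincide, giving $\Gamma_eFGx=\Gamma_eFG\ub{\gamma}\in T_{\sigma^\ast}$, contradicting $x\in S$. Thus $\ub{\gamma}\in S$.

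Next, assuming for contradiction $S\neq\emptyset$, Zorn's lemma yields a maximal $x\in S$. I would argue $x\notin S$ as follows. Set $y:=\contd{x}{\Omega_e}{F}$, which is total. The recursion (\ref{eqn-gamma}) either halts immediately, in which case $\Gamma_eFGx=\seq{y}\in T_{\sigma^\ast}$ directly; or it recurses on $x':=y\oplus Gy\Omega_{e,F,y}$ with $Gy\Omega_{e,F,y}\succ y$, so by compatibility $y<x'$, and then by condition (\ref{truncb}) of the truncation $x<x'$. By maximality of $x$ in $S$ we get $x'\notin S$, i.e. $\Gamma_eFGx'\in T_{\sigma^\ast}$, and therefore $\Gamma_eFGx=y::\Gamma_eFGx'\in T_{\sigma^\ast}$. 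Either way $x\notin S$, the desired contradiction; so $S=\emptyset$ and $\Gamma_eFGx$ is total for all total $x$. Totality over total $F$ and $G$ then follows since the whole argument is uniform in them.

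The main obstacle I anticipate is the case split in the recursion of $\Gamma_e$: unlike $\Psi$, the defining equation (\ref{eqn-gamma}) branches on the boolean $t_CGy\Omega_{e,F,y}$, so I must be careful that this test is decided (totally) before invoking the recursive call — which requires $y$ and $\Omega_{e,F,y}$ to be total, itself needing totality of $\Omega_e$ as supplied by Theorem \ref{thm-maintot}. A secondary subtlety is verifying that the branch value $\seq{}$ and the cons operation $y::(\cdot)$ genuinely land in $T_{\sigma^\ast}$, which is immediate from the definition of $T_{\sigma^\ast}$ once $y$ and the tail are total. I expect no serious difficulty here, as everything reduces to the bookkeeping already done for Theorem \ref{thm-totcont}.
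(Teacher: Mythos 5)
Your proof is correct, but it follows a different decomposition than the paper. The paper does not rerun the Zorn's lemma argument at all: it observes that $\Gamma_e$ is definable as an instance of the controlled recursor, namely $\Gamma_e:=\lambda F,G\,.\,\Psi(\omega F)(fFG)$ with $\omega Fxp:=\contd{x}{\Omega_e}{F}$ and $fFGxp:=x::(\,[]$ if $t_CGx\Omega_{e,F,x}$, else $p(Gx\Omega_{e,F,x})\,)$, both total since $\Omega_e$ is total by Theorem \ref{thm-maintot}; it then invokes Theorem \ref{thm-totcont} directly, noting that $\cont{x}{\Psi}{\omega F,fFG}=\contd{x}{\Omega_e}{F}$ so the required truncation condition is exactly the one assumed in Theorem \ref{thm-maintot}. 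You instead replay the Zorn's lemma argument of Theorem \ref{thm-totcont} specialised to $\Gamma_e$, with the set $S:=\{x\in T_\sigma \mid \Gamma_eFGx\notin T_{\sigma^\ast}\}$; this is sound, and your simplification of $S$ to a single clause is legitimate precisely because totality of $\Omega_e$ (hence of $\contd{\cdot}{\Omega_e}{F}$ and $\Omega_{e,F,y}$, and so decidedness of the boolean test $t_CGy\Omega_{e,F,y}$) is already available — the point you correctly flag as the main obstacle. What the two routes buy: yours is self-contained and avoids introducing the auxiliary functionals $\omega$ and $f$, but duplicates the machinery of Theorem \ref{thm-totcont}; the paper's reduction is shorter, reuses the general theorem as intended, and additionally exhibits $\Gamma_e$ as definable from $\Psi$ and $\Omega_e$, which is the cleaner structural fact underlying the later model-existence claims.
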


\begin{proof}
We can define $\Gamma_e:=\lambda F,G\; . \; \Psi(\omega F)(fFG)$ where $\omega$ and $f$ are total representations in $\modpar$ of the following functionals definable in $\HAw+\Omega_e$:
\begin{equation*}
\begin{aligned}
\omega Fxp&:=_\sigma \contd{x}{\Omega_e}{F}\\
fFGxp&:=_{\sigma^\ast}x::\begin{cases}[] & \mbox{if $t_CGx\Omega_{e,F,x}$}\\ p(Gx\Omega_{e,F,x}) & \mbox{otherwise}\end{cases}
\end{aligned}
\end{equation*}
where here $p:\rho\to\sigma^\ast$ (note that totality of $\omega$ and $f$ follows from totality of primitive recursive functionals plus totality of $\Omega_e$ as established in Theorem \ref{thm-maintot} above). To see that $\Gamma_e$ satisfies (\ref{eqn-gamma}) is just a case of unwinding the definitions.

Now, if $F$ and $G$ are total it follows that $\omega F$ and $fFG$ are also total, and so by Theorem \ref{thm-totcont}, $\Gamma_eFG=\Psi(\omega F)(fFG)$ is total if we can show that $\cont{\cdot}{\Psi}{\omega F,fFG}$ and $\lambda x.(fFG)x\Psi_{\omega F,fFG,x}$ form a truncation. But $\cont{x}{\Psi}{\omega F,fFG}=\contd{x}{\Omega_e}{F}$, and so this follows from the assumption that $\contd{\cdot}{\Omega_e}{F}$ and $\lambda x.Fx\Omega_{e,F,x}$ form a truncation. Formally, if $\contd{x}{\Omega_e}{F}$ is total for $x\in L$ (which it always is by totality of $\Omega_e$), then since in addition $Fy\Omega_{e,F,y}$ is total for $y:=\contd{x}{\Omega_e}{F}$ then $\contd{\cdot}{\Omega_e}{F}$ has a point of continuity $d$ for $x$. Condition (\ref{truncb}) follows trivially. Therefore we have shown that $\Gamma_e$ is total.
\end{proof}
\textbf{Remark.} Our use of controlled recursion means that there are no type level restrictions on the output types $\Omega_eFx:\delta$ or $\Gamma_eFGx:\sigma^\ast$. This not only permits a greater degree of generality but is essential even for simple applications: In Example \ref{ex-countableideal}, $\sigma:=\nat\to\bool$ and thus $\sigma^\ast$ is a higher type. 
%
%%%%%%%%%%%%%%%%%%%%%%%%%%%%%%%%%%%%%%%%%%%%%%%%%
%%%%%%%%%%%%%%%%%%%%%%%%%%%%%%%%%%%%%%%%%%%%%%%%%
\section{Application: The lexicographic ordering}
%%%%%%%%%%%%%%%%%%%%%%%%%%%%%%%%%%%%%%%%%%%%%%%%%
%%%%%%%%%%%%%%%%%%%%%%%%%%%%%%%%%%%%%%%%%%%%%%%%%
\label{sec-lex}

We conclude the paper by showing how our parametrised results can now be implemented in the special case of induction over the lexicographic ordering on sequences. This constitutes a direct counterpart to open induction as presented in \cite{Berger(2004.0)}, and is closely related to the recursive scheme introduced in \cite{Powell(2018.0)} for extracting a witness from the proof of Higman's lemma. 
\begin{definition}[$\HAw$]
\label{def-lex}
Let $\theta$ be some arbitrary type, and suppose that $\lhd:\theta\times\theta\to\bool$ is a decidable relation on $\theta$ such that induction over $\lhd$ is provable in $\HAw$. Setting $\sigma:=\nat\to\theta$, $\delta:=\nat$, $\rho:=\nat\times(\nat\to\theta)$ and $\nu:=\theta^\ast$, define
\begin{equation*}
\begin{aligned}
\comp{x}{n}&:=\seq{x(0),\ldots,x(n-1)}\\
x\oplus (n,y)&:=\comp{x}{n}\at y\\
(n,y)\succ x &:=y(n)\lhd x(n)
\end{aligned}
\end{equation*}
where $(\seq{x(0),\ldots,x(n-1)}\at y)(i):=x(i)$ if $i<n$ and $y(i)$ otherwise. We define $\LEX_{\rhd}$ to be the principle $\ZL_{\comp{\cdot}{},\oplus,\prec}$ for the parameters given above i.e.  
\begin{equation*}
\begin{aligned}
\exists x\forall d Q(\comp{x}{d})&\to \exists y(\forall d Q(\comp{y}{d})\\
&\wedge\forall (n,z)(z(n)\lhd y(n)\to \exists d\neg Q(\comp{\comp{y}{n}\at z}{d})).
\end{aligned}
\end{equation*}
\end{definition}
Our axiom $\LEX_\rhd$ is essentially the contrapositive of open induction as presented in \cite{Berger(2004.0)}, and as such the theory $\WPAw+\QFAC+\LEX_\rhd$ (for various instantiations of $\rhd$) is capable not only of formalizing large parts of mathematical analysis but also giving direct formalizations of minimal bad sequence arguments common in the theory of well quasi orderings. We now show how it can be given a direct computational interpretation using the theory developed so far.
\begin{lemma}[$\modpar$]
\label{lem-cclex}
Define $L\subset T_{\sigma}$ to be the set of all \emph{strict} total objects i.e. those satisfying $x(\bot)=\bot_\theta$ (recall that $\sigma=\nat\to\theta$), and let the partial order $<$ on $T_{\sigma}$ by defined by
\begin{equation*}
y>x:\Leftrightarrow \exists n\in\NN(\comp{y}{n}=_{\NN^\ast}\comp{x}{n}\wedge y(n)\lhd x(n))
\end{equation*}
where here $\lhd$ is now interpreted as a total functional $T_{\theta\times\theta\to \bool}$. In other words, $y>x$ if it is lexicographically \emph{smaller} than $x$ w.r.t. $\lhd$. Then $<$ is compatible with $(\oplus,\prec)$ and chain bounded w.r.t. $\comp{\cdot}{}$ and $L$.
\end{lemma}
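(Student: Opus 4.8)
The plan is to verify the two properties in Definition~\ref{def-compatible} and Definition~\ref{def-ccpartial} directly for the given $<$ and $L$, mirroring the argument already used in Example~\ref{ex-simpleset} and Example~\ref{ex-ccapp} but adapted from $\theta=\bool$ to arbitrary $\theta$ equipped with the wellfounded relation $\lhd$.

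\textbf{Compatibility.} This is the easy half. Given $(x,a)\in T_{\sigma\times\rho}$ with $x\prec a$, write $a=(n,z)$, so the hypothesis $x\prec a$ unfolds to $z(n)\lhd x(n)$. By definition $x\oplus(n,z)=\comp{x}{n}\at z$, which agrees with $x$ on all indices $i<n$ (so $\comp{x\oplus(n,z)}{n}=\comp{x}{n}$) and takes value $z(n)$ at index $n$. Hence the witness $n$ shows $x\oplus(n,z)>x$ in the lexicographic sense, i.e. $x<x\oplus a$. One small point to check is that all the relevant coordinates are in $\NN$ (not $\bot$), which holds since $x\in T_\sigma$ and $z$ ranges over total objects; and that $\comp{\cdot}{n}$ applied to total strict objects lands in $\NN^\ast$ so the equality $\comp{y}{n}=_{\NN^\ast}\comp{x}{n}$ makes sense.

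\textbf{Chain boundedness.} Given a nonempty chain $\gamma\subseteq T_\sigma$, I would define the candidate upper bound $\ub{\gamma}$ coordinatewise: for each $n\in\NN$, among the total objects in $\gamma$ there is — by lexicographic totality of the order restricted to $\gamma$ — a "$\lhd$-least behaviour up to coordinate $n$", and I set $\ub{\gamma}(n)$ to be that eventually-stable value, and $\ub{\gamma}(\bot):=\bot_\theta$ so that $\ub{\gamma}\in L$. More carefully: since $\gamma$ is totally ordered by $<$, for each fixed $n$ the initial segments $\comp{x}{n}$ for $x\in\gamma$ are themselves totally ordered by the lexicographic order on $\theta^n$ induced by $\lhd$; since $\lhd$ is wellfounded (induction over $\lhd$ is provable), this lexicographic order on the finite power $\theta^n$ is wellfounded too, so there is a $<$-largest element $x_n\in\gamma$ realising the $\lhd$-least initial segment of length $n$, and these are coherent in $n$. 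Set $\ub\gamma(n):=x_n(n)$. Then for each $x\in\gamma$ and each $n$ one checks $x\le\ub\gamma$ by comparing with $x_n$ (either $x$ already has the minimal length-$n$ segment, in which case they agree up to $n$, or $x$ is strictly lexicographically above, giving a witness), so $\ub\gamma$ is genuinely an upper bound; and for the approximation condition, given $d=n\in T_\delta=\NN$ we have $\comp{\ub\gamma}{n}=\comp{x_n}{n}$ with $x_n\in\gamma$, which is exactly what Definition~\ref{def-ccpartial} requires.

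\textbf{Main obstacle.} The delicate point is the construction of $\ub\gamma$ and the verification that it is an upper bound \emph{inside} $T_\sigma$: one must argue that the coordinatewise-defined function is well-defined (the per-coordinate minimisation is over a wellfounded set, using wellfoundedness of $\lhd$ lifted to finite tuples), that it is monotone/continuous as an element of $P_\sigma$ and total on $\NN$ hence in $T_\sigma$, that it is strict so it lands in $L$, and — most importantly — that $x\le\ub\gamma$ for every $x\in\gamma$ rather than merely $x<\ub\gamma$ or incomparability. I expect the bookkeeping around the case $x\in\gamma$ having exactly the minimal initial segment of every length (so that $x=\ub\gamma$) versus $x$ differing at some first coordinate to be the part requiring the most care; everything else is a routine adaptation of Example~\ref{ex-ccapp}.
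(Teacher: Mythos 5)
Your overall strategy is the same as the paper's: compatibility is immediate (the paper dismisses it in one line, your verification is the right one), and chain boundedness is proved by building $\ub{\gamma}$ greedily, coordinate by coordinate, as the ``lexicographically least behaviour'' of the chain, with the approximation condition then holding by construction and the upper-bound property following from totality of $<$ on $\gamma$ plus minimality of the chosen values. However, there is one concrete slip in how you make the construction well-defined. You appeal to ``a $<$-largest element $x_n\in\gamma$ realising the $\lhd$-least initial segment of length $n$'' and set $\ub{\gamma}(n):=x_n(n)$. Such a $<$-largest element need not exist (a chain realising the minimal length-$n$ segment can fail to have a maximum), and without it your definition is not well-defined: the minimal initial segment of length $n$ determines coordinates $0,\ldots,n-1$ only, so different realisers $x_n$ may disagree at coordinate $n$. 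Wellfoundedness of the lexicographic order on $\theta^n$ does not rescue this; it gives you a least segment, not a canonical realiser.

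The repair is exactly what the paper does: define $u_k$ by recursion on $k$ as the $\lhd$-minimal element of $S_k:=\{x(k)\mid x\in\gamma\mbox{ and }(\forall i<k)(x(i)=u_i)\}$, which is nonempty by induction and has a minimal element because the $\lhd$-minimum principle follows from $\lhd$-induction (note this only uses the minimum principle for $\lhd$ itself, not the wellfoundedness of its finite lexicographic powers, which your route invokes but does not need). Equivalently, you could minimise over segments of length $n+1$ rather than $n$ when defining $\ub{\gamma}(n)$. With that change your upper-bound check goes through as in the paper: for $x\in\gamma$ with $x\neq\ub{\gamma}$, take the least $k$ with $x(k)\neq u_k$, pick any $y\in\gamma$ realising $\seq{u_0,\ldots,u_k}$; comparability of $x$ and $y$ in the chain forces $x(k)\lhd u_k$ or $u_k\lhd x(k)$, the first is excluded by minimality of $u_k$ in $S_k$ (since $x(k)\in S_k$), and the second exhibits the witness for $\ub{\gamma}>x$. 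So the gap is a local well-definedness issue, not a flaw in the approach, but as written the ``largest element'' claim is false and is doing real work in your definition.
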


\begin{proof}
Compatibility is clear, while chain boundedness follows easily using a standard construction for the lexicographic ordering. Take some nonempty chain $\gamma\subset T_\sigma$ and inductively define the sequence of total objects $u_k\in T_\theta$ for $k\in\NN$ by taking $u_k$ to be the $\lhd$-minimal element of the set
\begin{equation*}
S_k:=\{x(k)\; | \; x\in\gamma\mbox{ and $(\forall i<k)(x(i)=u_i)$}\}\subseteq T_\theta.
\end{equation*}
Note that $S_k$ are nonempty by induction on $k$, and $u_k$ is well-defined since the $\lhd$-minimum principle is provable from induction over $\lhd$, which is provable in $\HAw$ and thus satisfied by the total elements $T_\theta$. Now define $\ub{\gamma}(k):=u_k$ for $k\in\NN$ and $\ub{\gamma}(\bot)=\bot$, which is clearly an element of $L\subset T_{\nat\to\theta}$. It follows by definition that for any $d\in\NN$ there exists some $x\in\gamma$ with $\comp{x}{d}=\seq{u_0,\ldots,u_{d-1}}=\comp{\ub{\gamma}}{d}$. To see that $\ub{\gamma}$ is an upper bound, take some $x\in\gamma$ and assume that $x\neq \ub{\gamma}$. Let $k\in\NN$ be the least with $x(k)\neq \ub{\gamma}(k)=u_k$. Then by definition of $u_k$ there is some $y\in\gamma$ with $\comp{y}{k}=\seq{u_0,\ldots,u_{k-1}}=\comp{x}{k}$ and $y(k)=u_k$. Since $<$ is a total order on $\gamma$ we must have either $x<y$ or $y<x$, and since $x(k)\neq y(k)$ this means that either $x(k)\lhd y(k)$ or $y(k)\lhd x(k)$. But by minimality of $u_k=y(k)$ we must have $\ub{\gamma}(k)=y(k)\lhd x(k)$ and thus $\ub{\gamma}>x$. This proves that $x\leq \ub{\gamma}$ for any $x\in\gamma$.
\end{proof}

Our next step is to define a suitable closed term $e$ of $\HAw$ which not only induces a truncation in the sense of Theorem \ref{thm-maintot} but also satisfies $\SR_e$ in the total continuous functionals. For this, we introduce a powerful idea that is already implicit in Spector's fundamental bar recursive interpretation of the axiom of countable choice \cite{Spector(1962.0)}, and has been studied in more detail in \cite{OliPow(2012.2)}.

From now on we make the fairly harmless assumption that the canonical object $0_\theta$ is minimal w.r.t to $\lhd$ (this could in theory be circumvented but having it makes what follows slightly simpler). 
\begin{definition}[$\HAw$]
\label{def-spec}
For $x:\sigma$ and $n:\nat$ let 
\begin{equation*}
\exts{x}{n}:=\comp{x}{n}\at (\lambda i.0_\theta):\sigma,
\end{equation*}
and define the primitive recursive functional $\spec:(\sigma\to\nat)\to\sigma\to\sigma$ by
\begin{equation*}
\spec \phi xk:=_\theta \begin{cases}0_\theta & \mbox{if $(\exists i\leq k)(\phi(\exts{x}{i})<i)$}\\ x(k) & \mbox{if $(\forall i\leq k)(\phi(\exts{x}{i})\geq i)$}\end{cases}
\end{equation*}
where we note that the bounded quantifiers can be represented as bounded search terms in System T.
\end{definition}
\begin{lemma}[$\modpar$]
\label{lem-spec}
Let us represent $\spec$ in $\modpar$ by the total continuous functional 
%
%\begin{equation*}
%\spec \phi xk:= \begin{cases}0_\theta & \mbox{if $(\exists i\leq k)(\phi(\exts{x}{i})<i)$}\\ x(k) & \mbox{if $(\forall i\leq k)(\phi(\exts{x}{i})\geq i)$}\\ \bot & \mbox{otherwise}\end{cases}
%\end{equation*}
%
\begin{equation*}
\eta \phi xk:=\begin{cases}0_\theta & \mbox{if $(\exists i\leq k)(\forall j\leq i(\phi(\exts{x}{j})\in\NN)\wedge \phi(\exts{x}{i})<i)$}\\ x(k) & \mbox{if $(\forall i\leq k)(\phi(\exts{x}{i})\in\NN\wedge\phi(\exts{x}{i})\geq i)$}\\ \bot & \mbox{otherwise}\end{cases}
\end{equation*}
with $\eta \phi x\bot=\bot$.\footnote{This is a standard domain theoretic interpretation of $\eta$, where the bounded search terminates with $0$ for the first $i\leq k$ it finds with $\phi(\exts{x}{i})<i$, and returns $\bot$ if $\phi(\exts{x}{i})$ is undefined for any $i$ that is queried.} Then for any $\phi\in P_{\sigma\to\nat}$, the functionals $\eta\phi\in P_{\sigma\to\sigma}$ and $\phi$ form a truncation w.r.t. $\comp{\cdot}{}$, $L$ and $<$.
\end{lemma}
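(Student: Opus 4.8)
The goal is to verify the two conditions of Definition \ref{def-trunc} for the pair $(\eta\phi,\phi)$, with respect to $\comp{\cdot}{}$, $L$ and the lexicographic order $<$ from Lemma \ref{lem-cclex}. Condition (\ref{truncb}) — that $\eta\phi x < y$ implies $x < y$ for total $x,y$ — should be the easy part. The key observation is that $\eta\phi x$ is always a \emph{truncation} of $x$ in the literal sense: either $\eta\phi x = x$ (when no $i$ triggers the first case at any depth $k$, i.e.\ $\phi(\exts{x}{i})\geq i$ for all $i$), or there is a least index $m$ with $\eta\phi x = \exts{x}{m} = \comp{x}{m}\at(\lambda i.0_\theta)$. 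In the first case there is nothing to prove. In the second case, $\comp{\eta\phi x}{m} = \comp{x}{m}$ and $(\eta\phi x)(k) = 0_\theta$ for $k \geq m$; so if $\eta\phi x < y$ then either the witnessing index $n$ for $\eta\phi x < y$ satisfies $n < m$, in which case $\comp{x}{n}=\comp{\eta\phi x}{n}=\comp{y}{n}$ and $y(n)\lhd (\eta\phi x)(n) = x(n)$ giving $x < y$ directly; or $n \geq m$, but then $y(n) \lhd (\eta\phi x)(n) = 0_\theta$, contradicting minimality of $0_\theta$ w.r.t.\ $\lhd$. So only the case $n<m$ occurs and $x<y$ follows.

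For condition (\ref{trunca}): fix a strict $x \in L$ with $\eta\phi x \in T_\sigma$ and $\phi(\eta\phi x) \in T_\theta = \NN$. I need to produce $d \in \NN$ such that $\comp{x}{d} = \comp{y}{d}$ forces $\eta\phi x = \eta\phi y$ for all total $y$. The plan is to extract this $d$ from the domain-theoretic continuity already packaged into $\eta$. Since $\eta\phi x$ is total, for every $k$ the value $\eta\phi x k$ is defined, which by the definition of $\eta$ means that for each $k$ the relevant bounded search over $i \leq k$ terminates — i.e.\ either some $i\leq k$ with $\phi(\exts{x}{i}) < i$ is found after $\phi(\exts{x}{j})\in\NN$ for all $j\leq i$, or $\phi(\exts{x}{i})\in\NN$ with $\phi(\exts{x}{i})\geq i$ for all $i\leq k$. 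Together with totality of $\phi(\eta\phi x)$, which "locks in" the shape of $\eta\phi x$, one argues that $\eta\phi x$ is in fact determined by finitely much of $x$: let $m$ be as above (the least triggering index, or, if none, note that the value of $\eta\phi x$ at each coordinate must still be pinned down — here one uses that $\phi(\eta\phi x)$ total forces a genuine stabilization, because if $\phi(\exts{x}{i})\geq i$ held for all $i$ then $\eta\phi x = x$ and the search for each $k$ only queried $\phi$ at the finitely many points $\exts{x}{0},\dots,\exts{x}{k}$, and I need a single $d$ working uniformly for all $k$ — this is where totality of $\phi(\eta\phi x)=\phi(x)$ supplies a modulus). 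Concretely: take a compact $x_0 \sqsubseteq x$ with $\phi(x_0)$ and $\eta\phi x_0 k$ (for $k$ up to the relevant bound) all already defined, pick $d$ with $x_0(i) = \bot$ for $i \geq d$, and check that $\comp{x}{d}=\comp{y}{d}$ plus strictness of $x$ forces $x_0 \sqsubseteq y$, hence $\eta\phi x = \eta\phi x_0 = \eta\phi y$ by monotonicity and the fact that total values are maximal.

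\textbf{Main obstacle.} The delicate point is condition (\ref{trunca}) in the case where \emph{no} index ever triggers the $0_\theta$ branch, so that $\eta\phi x = x$: then a priori $\eta\phi x$ depends on \emph{all} of $x$, and the only thing saving us is that $\phi(\eta\phi x) = \phi(x)$ is total, which must be leveraged to get a finite modulus. The honest way to handle this is via the sequential continuity machinery: treat $\eta\phi$ as a continuous functional, use that $\eta\phi x$ being total together with $\phi(\eta\phi x)$ total makes $\lambda z.\phi(\eta\phi z)$ behave like a map into a discrete type at the argument $x$, apply a Lemma \ref{lem-seqcont}-style argument to $\lambda z. \phi(\eta\phi z)$ to get a modulus $d$, and then show $d$ already suffices for $\eta\phi x = \eta\phi y$ — essentially because if $\phi$ never sees a triggering index on $\exts{x}{i}$ within the first $d$ coordinates, the same holds for $y$, and beyond coordinate $d$ the definitions of $\eta\phi x$ and $\eta\phi y$ proceed identically since they only consult $\phi$ at sequences that agree with $\comp{x}{d}=\comp{y}{d}$ on their relevant initial segments. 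Getting this uniformity argument stated cleanly, rather than coordinate-by-coordinate, is the crux; the rest is bookkeeping with the definition of $\eta$ and strictness.
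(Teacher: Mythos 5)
Your treatment of condition (\ref{truncb}) is fine and matches the paper's, and so is the easy half of condition (\ref{trunca}): once there is a least $m$ with $\phi(\exts{x}{m})<m$, so that $\eta\phi x=\exts{x}{m}$, the modulus $d:=m$ works. The gap lies exactly in the case you flag as the ``main obstacle'', where no index triggers the $0_\theta$ branch and $\eta\phi x=x$, and your proposed way of handling it does not work. A modulus of continuity for the discrete-valued map $\lambda z.\,\phi(\eta\phi z)$ (or for $\phi$) at $x$ does not yield a modulus for $\eta\phi$ itself: if $\eta\phi x=x$ and $y$ is total, agrees with $x$ below $d$ but differs somewhere above $d$, then both computations may well ``proceed identically'' by taking the second branch at every $k$, yet in that case $\eta\phi y(k)=y(k)\neq x(k)=\eta\phi x(k)$ for some $k\geq d$, so $\eta\phi x\neq \eta\phi y$. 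The compact-approximation variant fails for the same underlying reason: the output type $\sigma$ is a function type, so totality of $\eta\phi x$ is not witnessed by any compact $x_0\sqsubseteq x$, and $\eta\phi x_0=\eta\phi x$ is false in general (one only gets $\eta\phi x_0\sqsubseteq\eta\phi x$). Indeed, if the case $\eta\phi x=x$ could actually occur under the stated hypotheses, condition (\ref{trunca}) would simply be false for that $x$ --- this is precisely the identity-functional counterexample in Example \ref{ex-trunc} --- so no choice of modulus can rescue it.

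The missing idea is that this case is vacuous: one must show that $x\in L$, $\eta\phi x\in T_\sigma$ and $\phi(\eta\phi x)\in\NN$ together force the existence of some $n$ with $\phi(\exts{x}{n})<n$. The paper does this by a Spector-style contradiction: if $\phi(\exts{x}{i})\geq i$ for all $i$ (the values $\phi(\exts{x}{i})$ being in $\NN$ by totality of $\eta\phi x$), then $\eta\phi x=x$, hence $\phi x\in\NN$; by Lemma \ref{lem-seqcont} (using strictness of $x$, i.e.\ $x\in L$) there is a $d$ with $\phi y=\phi x$ whenever $y$ agrees with $x$ below $d$; now instantiate $y:=\exts{x}{N}$ with $N:=\max\{\phi x+1,d\}$ to obtain $\phi(\exts{x}{N})=\phi x<N$, contradicting the assumption. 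Your sketch invokes sequential continuity but never plays the modulus off against a sufficiently long truncated extension of $x$, and that is the step that actually closes the argument; without it the proof of condition (\ref{trunca}) does not go through.
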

\begin{proof}
Part (\ref{truncb}) is simple: Suppose that $x,y,\eta\phi x\in T_\sigma$ and $\eta\phi x<y$ so that there exists some $n\in\NN$ with $\comp{y}{n}=\comp{\eta\phi x}{n}$ and $y(n)\lhd \eta\phi x(n)$. Since we cannot have $y(n)\lhd 0_\theta$ by our minimality assumption, we must have $\eta\phi x(n)=x(n)$. But then by definition of $\eta$ it follows that $\eta\phi x(k)=x(k)$ for all $k<n$, and thus $\comp{y}{n}=\comp{x}{n}$ and so $x<y$.

For part (\ref{trunca}), let us now assume that $x\in L$ with $\eta\phi x\in T_\sigma$ and $\phi(\eta\phi x)\in\NN$. We first show that there exists some $n\in\NN$ with $\phi(\exts{x}{n})<n$. Suppose for contradiction that for all $i\in\NN$ we have either $\phi(\exts{x}{i})=\bot$ or $\phi(\exts{x}{i})\geq i$. The first possibility is ruled out since if $\phi(\exts{x}{i})=\bot$ then $\eta\phi xi=\bot$ contradicting totality of $\eta\phi x$. But this means that $\eta\phi x=x$ (since also $\eta\phi\bot=\bot=x(\bot)$). But then $\phi(\eta\phi x)=\phi x\in\NN$ and so by Lemma \ref{lem-seqcont} there exists some $d\in\NN$ such that $\phi x=\phi y$ whenever $x(i)=y(i)$ for all $i<d$. Now set $N:=\max\{\phi x+1,d\}$ and consider $y:=\exts{x}{N}$. Then $x(i)=y(i)$ for all $i<N$ and so also for all $i<d$, which implies that
\begin{equation*}
\phi(\exts{x}{N})=\phi x<\phi x+1\leq N
\end{equation*}
a contradiction. Therefore we have shown there exists some $n\in\NN$ with $\phi(\exts{x}{n})<n$, from which it follows that $\eta\phi x=\exts{x}{m}$ for the least such $m\in\NN$ with this property (again, $\phi(\exts{x}{j})\in\NN$ for all $j\leq m$ by totality of $\eta\phi x$). Let us now suppose that $y\in P_\sigma$ satisfies $\comp{x}{m}=\comp{y}{m}$. Then for $k<m$, since $\phi(\exts{y}{i})=\phi(\exts{x}{i})\geq i$ for all $i\leq k$ it follows that $\eta\phi yk=y(k)=x(k)$, and if $k\geq m$, since $\phi(\exts{y}{m})=\phi(\exts{x}{m})<m$ it follows that $\eta\phi yk=0$, and thus $\eta\phi y=\exts{x}{m}=\eta\phi x$.
\end{proof}
\begin{lemma}[$\modpar$]
\label{lem-lextot}
Let $\Omega_e$ be a fixed point of the equation (\ref{eqn-omegabig}) as in Theorem \ref{thm-maintot}, where now $e$ is defined by
\begin{equation*}
eFxp:=\eta(\lambda y.Fy(p|_y))x
\end{equation*}
for $\eta$ as in Definition \ref{def-spec} (resp. Lemma \ref{lem-spec}) and $$p|_y(n,z):=p(z)\mbox{ if $z(n)\lhd y(n)$ else 0}.$$ Then $\contd{\cdot}{\Omega_e}{F}\in P_{\sigma\to\sigma}$ and $\lambda x.Fx\Omega_{e,F,x}\in P_{\sigma\to\nat}$ form a truncation w.r.t. $\comp{\cdot}{}$, $L$ and $<$ for any $F$.
\end{lemma}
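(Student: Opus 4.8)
The goal is to verify that the pair $\contd{\cdot}{\Omega_e}{F}$ and $\lambda x.Fx\Omega_{e,F,x}$ satisfies the two conditions of Definition \ref{def-trunc} for the specific $e$ of Lemma \ref{lem-lextot}. The key observation is that $\contd{x}{\Omega_e}{F}=eFx\Omega_{e,F,x}=\eta(\lambda y.Fy(\Omega_{e,F,x}|_y))x$, so the truncation is built directly out of $\eta$ applied to the functional $\phi:=\lambda y.Fy(\Omega_{e,F,x}|_y)$. The strategy is therefore to reduce both conditions to the corresponding properties of $\eta$ already established in Lemma \ref{lem-spec}, taking care that the inner functional $\phi$ depends on $x$ (through $\Omega_{e,F,x}$), which is the one genuine subtlety.

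\textbf{Step 1 (condition (\ref{truncb})).} Suppose $x,y\in T_\sigma$ with $\contd{x}{\Omega_e}{F}\in T_\sigma$ and $\contd{x}{\Omega_e}{F}<y$. Writing $\phi_x:=\lambda z.Fz(\Omega_{e,F,x}|_z)$, we have $\contd{x}{\Omega_e}{F}=\eta\phi_x x$, and one checks that $\phi_x\in P_{\sigma\to\nat}$. Then part (\ref{truncb}) of Lemma \ref{lem-spec} applied to $\phi_x$ gives immediately that $\eta\phi_x x<y$ implies $x<y$, which is exactly what is required. This step is essentially immediate once the reduction is set up.

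\textbf{Step 2 (condition (\ref{trunca})).} Now suppose $x\in L$ with $\contd{x}{\Omega_e}{F}\in T_\sigma$ and $Fx'\Omega_{e,F,x'}\in T_\nat$, where $x':=\contd{x}{\Omega_e}{F}$. Here $\nu=\nat$, so I must produce $d\in\NN$ with $\comp{x}{d}=\comp{y}{d}\Rightarrow \contd{x}{\Omega_e}{F}=\contd{y}{\Omega_e}{F}$. The plan is: first show that $\phi_x(\eta\phi_x x)\in\NN$ — equivalently $F(\eta\phi_x x)(\Omega_{e,F,x}|_{\eta\phi_x x})\in\NN$ — which should follow from the totality of $\Omega_e$ (guaranteeing $\Omega_{e,F,x}$ and hence $\Omega_{e,F,x}|_{\eta\phi_x x}$ are total on the relevant arguments) together with the hypothesis that $Fx'\Omega_{e,F,x'}$ is total, after noting $\eta\phi_x x=x'$. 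Then apply part (\ref{trunca}) of Lemma \ref{lem-spec} to get a point of continuity $d$ for $\eta\phi_x$ at $x$. The remaining delicate point is that changing $x$ to $y$ with $\comp{x}{d}=\comp{y}{d}$ also changes the inner functional from $\phi_x$ to $\phi_y$; so I need that whenever $\comp{x}{d}=\comp{y}{d}$, the values $\eta\phi_x$ and $\eta\phi_y$ agree at $x$ resp. $y$. Inspecting the proof of Lemma \ref{lem-spec}, the point of continuity $m$ has the property $\eta\phi x=\exts{x}{m}$ depending only on $\comp{x}{m}$ and on the values $\phi(\exts{x}{j})$ for $j\le m$; since $\exts{x}{j}=\exts{y}{j}$ once $\comp{x}{j}=\comp{y}{j}$ and $\Omega_{e,F,x}$ restricted via $|_{\exts{x}{j}}$ depends on $x$ only through $\comp{x}{j}$ (as $x\oplus a$ only modifies $x$ beyond its finite prefix), one gets $\phi_x(\exts{x}{j})=\phi_y(\exts{y}{j})$ for $j\le m$. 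Taking $d$ at least this $m$ then closes the argument.

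\textbf{Main obstacle.} The only real difficulty is Step 2: disentangling the $x$-dependence of the inner functional $\phi_x=\lambda y.Fy(\Omega_{e,F,x}|_y)$ from the $x$-dependence that Lemma \ref{lem-spec} already controls. Once one observes that $\Omega_{e,F,x}$ enters only through $\Omega_{e,F,x}|_y$, and that $x\oplus a$ agrees with $x$ on any finite prefix, this dependence is itself "local" in $x$, so the point of continuity supplied by Lemma \ref{lem-spec} can be enlarged if necessary to absorb it. Everything else is bookkeeping with the definitions of $\eta$, $e$, and $\Omega_{e,F,x}$.
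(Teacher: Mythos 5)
Your overall strategy --- reduce both truncation conditions to Lemma \ref{lem-spec} after observing $\contd{x}{\Omega_e}{F}=\eta(\lambda y.Fy(\Omega_{e,F,x}|_y))x$ --- is the right one, and your "main obstacle" paragraph contains the essential idea (that $\Omega_{e,F,x}|_{\exts{x}{j}}$ depends on $x$ only through $\comp{x}{j}$, since $x\oplus(n,z)=\comp{x}{n}\at z$ and minimality of $0_\theta$ forces $n<j$). But there is one genuine flaw in Step 2: you justify the hypothesis $\phi_x(\eta\phi_x x)\in\NN$ partly by appealing to "the totality of $\Omega_e$". That is not available here. In this lemma $\Omega_e$ is only a fixed point in $\modpar$, a priori partial; its totality is precisely what Lemma \ref{lem-lextot} is needed to establish (via Theorem \ref{thm-maintot} and Corollary \ref{cor-modlex}), so invoking it is circular. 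The correct justification is the identity $\Omega_{e,F,x}|_{x'}=\Omega_{e,F,x'}$ for $x':=\contd{x}{\Omega_e}{F}$ (given $\eta\phi_x x\in T_\sigma$, $x'$ is either $x$ itself or of the form $\exts{x}{m}$, and in both cases the identity follows from the prefix-locality you state later), whence $\phi_x(x')=Fx'\Omega_{e,F,x'}$, which is total by the truncation hypothesis itself. Once this identity is stated up front, the totality appeal can simply be deleted.

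It is also worth comparing your route with the paper's, which packages the same locality fact more efficiently: it proves $\Omega_{e,F,x}|_{\exts{x}{i}}=\Omega_{e,F,\exts{x}{i}}$ for all $i$, notes that $\eta\phi x$ queries $\phi$ only at arguments of the form $\exts{x}{i}$, and concludes that $\contd{x}{\Omega_e}{F}=\eta\phi_{F,\Omega}x$ uniformly in $x$, where $\phi_{F,\Omega}:=\lambda y.Fy\Omega_{e,F,y}$ is a single $x$-independent functional which is exactly the second component of the claimed truncation pair. The lemma is then literally an instance of Lemma \ref{lem-spec} applied to $\phi_{F,\Omega}$, with no need to re-open the proof of Lemma \ref{lem-spec} or to compare $\phi_x$ with $\phi_y$ by hand, as your Step 2 does. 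Your pointwise version can be made to work (your comparison $\phi_x(\exts{x}{j})=\phi_y(\exts{y}{j})$ for $j\le m$ is correct and suffices), but it is more laborious and, as written, rests the key hypothesis on an unavailable totality assumption; fix that one step and enlarge nothing else, or better, uniformize as above.
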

\begin{proof}
We first observe that
\begin{equation*}
\contd{x}{\Omega_e}{F}=eFx\Omega_{e,F,x}=\eta(\lambda y.Fy(\Omega_{e,F,x}|_y))x.
\end{equation*}
We now argue that for any $i\in\NN$ we have 
\begin{equation*}
\Omega_{e,F,x}|_{\exts{x}{i}}=\Omega_{e,F,\exts{x}{i}}.
\end{equation*}
For this we only need to check arguments $(n,y)$ which satisfy $(n,y)\succ \exts{x}{i}$ i.e. $y(n)\lhd (\exts{x}{i})(n)$. But by minimality of $0_\theta$ this is only possible if $n<i$ and $y(n)\lhd x(n)$, in which case 
\begin{equation*}
\begin{aligned}
\Omega_{e,F,x}|_{\exts{x}{i}}(n,y)&=\Omega_{e,F,x}(n,y)=\Omega_eF(\comp{x}{n}\at y)\\
&=\Omega_eF(\comp{\exts{x}{i}}{n}\at y)=\Omega_{e,F,\exts{x}{i}}(n,y).
\end{aligned}
\end{equation*}
Since $\eta\phi x$ only depends on $\phi$ for arguments of the form $\exts{x}{i}$, it follows that
\begin{equation*}
\contd{x}{\Omega_e}{F}=\eta\phi_{F,\Omega} x\mbox{ \ \ for \ \ }\phi_{F,\Omega}:=\lambda y.Fy\Omega_{e,F,y}.
\end{equation*}
But for any $F$, by Lemma \ref{lem-spec} applied to $\phi:=\phi_{F,\Omega}$ as defined above, we have that $\eta\phi_{F,\Omega}$ and $\phi_{F,\Omega}$ form a truncation w.r.t. $\comp{\cdot}{}$, $L$ and $<$, and the result follows.
\end{proof}
\begin{corollary}[$\modpar$]
\label{cor-modlex}
Let $\Omega_e$ and $\Gamma_e$ be fixed points of the equations (\ref{eqn-omegabig}) and (\ref{eqn-gamma}) respectively, for $e$ be as defined in Lemma \ref{lem-lextot}. Then $\Omega_e$ and $\Gamma_e$ are total, and thus $\modcont\models\HAw+\Omega_e+\Gamma_e$.
\end{corollary}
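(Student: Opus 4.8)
The plan is to assemble Corollary \ref{cor-modlex} entirely from the machinery already established, with no new computation. The two assertions — that $\Omega_e$ is total and that $\Gamma_e$ is total — are precisely the conclusions of Theorems \ref{thm-maintot} and \ref{thm-maintots}, so the task reduces to checking that the hypotheses of those theorems are met for the specific $e$ introduced in Lemma \ref{lem-lextot} and the specific $(\oplus,\prec,\comp{\cdot}{})$ of Definition \ref{def-lex}.

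First I would invoke Lemma \ref{lem-cclex}, which supplies exactly the partial order $<$ (the reverse lexicographic ordering on $T_\sigma = T_{\nat\to\theta}$) and the set $L\subseteq T_\sigma$ of strict total objects, and verifies that $<$ is compatible with $(\oplus,\prec)$ and chain bounded with respect to $\comp{\cdot}{}$ and $L$. This discharges the first batch of hypotheses of Theorem \ref{thm-maintot}. Next, the remaining hypothesis of Theorem \ref{thm-maintot} is that $\contd{\cdot}{\Omega_e}{F}\in P_{\sigma\to\sigma}$ and $\lambda x\,.\,Fx\,\Omega_{e,F,x}\in P_{\sigma\to\delta}$ form a truncation with respect to $\comp{\cdot}{}$, $L$ and $<$ for every total $F$ — but this is literally the statement of Lemma \ref{lem-lextot} (with $\delta = \nat$). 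So Theorem \ref{thm-maintot} applies verbatim and yields totality of $\Omega_e$. Then Theorem \ref{thm-maintots} states that, \emph{under the very same assumptions}, $\Gamma_e$ is total; since those assumptions have just been checked, totality of $\Gamma_e$ follows immediately.

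Finally, from totality of $\Omega_e$ and $\Gamma_e$ in $\modpar$, the general principle recalled in Section \ref{sec-prelim-cont} — that a least fixed point of a primitive recursive defining equation which happens to be total gives rise to an object of $\modcont$ satisfying the same defining axiom — lets us conclude that $\modcont$ is a model of $\HAw$ extended with the two constants $\Omega_e$ and $\Gamma_e$ together with their defining equations (\ref{eqn-omegabig}) and (\ref{eqn-gamma}); that is, $\modcont\models\HAw+\Omega_e+\Gamma_e$. (Note that $\SR_e$ is deliberately \emph{not} claimed here; its validity in $\modcont$ for this $e$ is the subject of a subsequent result.)

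\textbf{Main obstacle.} There is essentially no obstacle at this stage: the corollary is pure bookkeeping, and the entire difficulty has already been absorbed into Lemmas \ref{lem-spec} and \ref{lem-lextot} — in particular the argument in Lemma \ref{lem-spec} that the specialiser $\eta\phi$ has a point of continuity, which rests on the sequential continuity Lemma \ref{lem-seqcont} applied at the right stabilised sequence $\exts{x}{n}$, and the observation in Lemma \ref{lem-lextot} that $\Omega_{e,F,x}|_{\exts{x}{i}} = \Omega_{e,F,\exts{x}{i}}$ so that $\contd{x}{\Omega_e}{F}$ really is of the form $\eta\phi_{F,\Omega}x$. Given those, the proof of the corollary is a two-line citation.
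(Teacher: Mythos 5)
Your proposal is correct and follows exactly the paper's route: the paper's proof of this corollary is the one-line citation of Lemmas \ref{lem-cclex} and \ref{lem-lextot} together with Theorems \ref{thm-maintot} and \ref{thm-maintots}, which you have simply spelled out (including the correct observation that $\SR_e$ is handled separately). No gaps.
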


\begin{proof}
Directly from Lemmas \ref{lem-cclex} and \ref{lem-lextot} together with Theorems \ref{thm-maintot} and Theorem \ref{thm-maintots}.
\end{proof}
\begin{lemma}
\label{lem-modsr}
$\SR_e$ is valid in $\modcont$ for $e$ as in Lemma \ref{lem-lextot}.
\end{lemma}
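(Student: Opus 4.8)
The goal is to show that for all $x$ and all total $F$, we have $\comp{x}{\Omega_eFx}=\comp{\contd{x}{\Omega_e}{F}}{\Omega_eFx}$ in $\modcont$. The plan is to use the explicit description of $\contd{x}{\Omega_e}{F}$ obtained in the proof of Lemma \ref{lem-lextot}, namely that $\contd{x}{\Omega_e}{F}=\eta\phi_{F,\Omega}x$ where $\phi_{F,\Omega}:=\lambda y.Fy\Omega_{e,F,y}$, together with the concrete behaviour of $\eta$ from Definition \ref{def-spec}. Recall that $\Omega_eFx=F\contd{x}{\Omega_e}{F}\Omega_{e,F,\contd{x}{\Omega_e}{F}}=\phi_{F,\Omega}(\contd{x}{\Omega_e}{F})$.

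First I would fix $x$ and total $F$, and work with a strict representative of $x$ (or argue that it suffices to do so, since both sides are determined by total inputs and we may always replace $x$ by a strict version agreeing on $\NN$). By totality of $\Omega_e$ (Corollary \ref{cor-modlex}), $\contd{x}{\Omega_e}{F}=\eta\phi_{F,\Omega}x$ is total and $\phi_{F,\Omega}(\contd{x}{\Omega_e}{F})\in\NN$. By the analysis in Lemma \ref{lem-spec}, either $\eta\phi_{F,\Omega}x=x$ (the case where $\phi_{F,\Omega}(\exts{x}{i})\geq i$ for all $i$), or else $\eta\phi_{F,\Omega}x=\exts{x}{m}$ where $m$ is least with $\phi_{F,\Omega}(\exts{x}{m})<m$. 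In the first case $\contd{x}{\Omega_e}{F}=x$ and the claim is trivial. In the second case, set $y:=\contd{x}{\Omega_e}{F}=\exts{x}{m}=\comp{x}{m}\at(\lambda i.0_\theta)$, and note $\comp{x}{m}=\comp{y}{m}$ by construction. The key point is then that $\Omega_eFx=\phi_{F,\Omega}(y)=Fy\Omega_{e,F,y}$, and $m$ is least with $\phi_{F,\Omega}(\exts{x}{m})<m$, so in particular $\Omega_eFx=\phi_{F,\Omega}(\exts{x}{m})<m$. Since $\comp{\cdot}{n}$ only inspects coordinates $0,\dots,n-1$, from $\Omega_eFx<m$ and $\comp{x}{m}=\comp{y}{m}$ we get $\comp{x}{\Omega_eFx}=\comp{y}{\Omega_eFx}$, which is exactly $\SR_e$.

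The main obstacle I anticipate is bookkeeping around the interplay of the domain-theoretic $\eta$ (which can return $\bot$) with the totality hypotheses: one has to be careful that $\phi_{F,\Omega}(\exts{x}{i})\in\NN$ for all the relevant $i\leq m$, and that the "first case" $\eta\phi_{F,\Omega}x=x$ really is consistent with $\phi_{F,\Omega}(\eta\phi_{F,\Omega}x)\in\NN$ — this was already handled inside the proof of Lemma \ref{lem-spec} (it leads to a contradiction there via Lemma \ref{lem-seqcont}, so actually only the second case can occur when $x\in L$), and I would simply cite that argument. A secondary subtlety is that $\SR_e$ is stated as a universal formula over \emph{all} $x$ in the language of $\HAw+\Omega_e$, so to validate it in $\modcont$ I must check it for arbitrary total $x$, not merely strict ones; but replacing $x$ by its strict version $x'$ (with $x'(\bot)=\bot$, $x'(n)=x(n)$ for $n\in\NN$) changes neither $\Omega_eFx$ nor $\comp{x}{d}$ nor $\contd{x}{\Omega_e}{F}$ on total data, so this reduction is routine. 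Modulo these points the verification is a direct unwinding, and I would present it as such.
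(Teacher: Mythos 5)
Your proposal is correct and follows essentially the same route as the paper's own proof: identify $\contd{x}{\Omega_e}{F}$ with $\eta\phi_{F,\Omega}x=\exts{x}{m}$ for $m$ least with $\phi_{F,\Omega}(\exts{x}{m})<m$ (reusing the argument of Lemma \ref{lem-spec}), observe $\Omega_eFx=\phi_{F,\Omega}(\exts{x}{m})<m$, and conclude $\comp{x}{\Omega_eFx}=\comp{\exts{x}{m}}{\Omega_eFx}$. Your extra remarks on strict representatives and on the degenerate case $\eta\phi_{F,\Omega}x=x$ are harmless bookkeeping that the paper handles implicitly by working directly in $\modcont$.
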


\begin{proof}
The argument in the proof of Lemma \ref{lem-lextot} that $\contd{\cdot}{\Omega_e}{F}=\eta\phi_{F,\Omega}$ for $\phi_{F,\Omega}:=\lambda y.Fy\Omega_{e,F,y}$ is also valid in $\modcont$, and a simpler version of the argument in the proof of Lemma \ref{lem-spec} verifies that there is some $n\in\NN$ such that $\phi_{F,\Omega}(\exts{x}{n})<n$, and moreover $\contd{x}{\Omega_e}{F}=\eta\phi_{F,\Omega}x=\exts{x}{m}$ where $m\in\NN$ is the least satisfying this property. But since $\phi_{F,\Omega}(\exts{x}{m})=\phi_{F,\Omega}(\contd{x}{\Omega_e}{F})=\Omega_eFx$ and thus $\Omega_eFx<m$, it follows that
\begin{equation*}
\comp{x}{\Omega_eFx}=\comp{\exts{x}{m}}{\Omega_eFx}=\comp{\contd{x}{\Omega_e}{F}}{\Omega_eFx}
\end{equation*}
and so $\SR_e$ is satisfied.
\end{proof}

\begin{theorem}
For any type $\theta$ and relation on $\lhd$ such that induction over $\lhd$ is provable in $\HAw$, the functional interpretation of (the negative translation of) $\LEX_\lhd$ can be solved by a term in $\HAw+\Omega_e+\Gamma_e$, provably in $\HAw+\Omega_e+\Gamma_e+\SR_e$, for any closed term $e$ of System T. Moreover, defining $e$ as in Lemma \ref{lem-lextot}, we have $\modcont\models \HAw+\Omega_e+\Gamma_e+\SR_e$.
\end{theorem}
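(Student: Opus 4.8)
The plan is to assemble the final theorem as a straightforward corollary of the machinery already developed, separating the two halves of the statement exactly as the paper's architecture suggests: first the syntactic (correctness) half, then the semantic (model existence) half. For the first half, I would observe that $\LEX_\lhd$ is by Definition \ref{def-lex} precisely the instance $\ZL_{\comp{}{},\oplus,\prec}$ for the parameters $\sigma:=\nat\to\theta$, $\delta:=\nat$, $\rho:=\nat\times(\nat\to\theta)$, $\nu:=\theta^\ast$ there specified, and that the reduction carried out in Section \ref{sec-dial-zorn} shows that solving the functional interpretation of the negative translation of $\ZL_{\comp{}{},\oplus,\prec}$ amounts to producing terms $r,s,t$ satisfying (\ref{eqn-goal}). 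Theorem \ref{thm-mainver} supplies exactly such terms — namely $rxFG:=\Omega_eFx$, $sxFG:=\tail(\Gamma_eFGx)$ and $txFG:=\Omega_{e,F,\tail(\Gamma_eFGx)}$ — for \emph{any} closed term $e$ of System T, and proves (\ref{eqn-goal}) provably in $\HAw+\Omega_e+\Gamma_e+\SR_e$. Since $\prec$ is decidable and $Q$ is taken quantifier-free, the term $t_C$ exists and $\Gamma_e$ is well-defined, so the first sentence of the theorem follows immediately.

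For the second half I would fix $e$ as in Lemma \ref{lem-lextot}, i.e. $eFxp:=\eta(\lambda y.Fy(p|_y))x$ with $\eta$ and $p|_y$ as there, and verify that $\modcont$ is a model of each of the three additional components in turn. That $\modcont\models\HAw$ is recalled in Section \ref{sec-prelim-cont}. For $\Omega_e$ and $\Gamma_e$: by the discussion in Section \ref{sec-prelim-cont}, it suffices to show the least fixpoints in $\modpar$ of the primitive recursive defining equations (\ref{eqn-omegabig}) and (\ref{eqn-gamma}) are total, and this is exactly Corollary \ref{cor-modlex}, which in turn rests on Lemma \ref{lem-cclex} (compatibility and chain boundedness of the lexicographic $<$ with respect to $L$ the strict total objects) and Lemma \ref{lem-lextot} (that $\contd{\cdot}{\Omega_e}{F}$ and $\lambda x.Fx\Omega_{e,F,x}$ form a truncation for every total $F$), fed into Theorems \ref{thm-maintot} and \ref{thm-maintots}. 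Finally, validity of $\SR_e$ in $\modcont$ is precisely Lemma \ref{lem-modsr}. Combining these, $\modcont\models\HAw+\Omega_e+\Gamma_e+\SR_e$, and since this theory proves $\forall y\dt{(\LEX_\lhd)^N}{t}{y}$ for the extracted $t$, the functional interpretation of $\LEX_\lhd$ is genuinely solved in a model.

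There is essentially no remaining mathematical obstacle — all the real work has been discharged in Lemmas \ref{lem-cclex}, \ref{lem-spec}, \ref{lem-lextot} and \ref{lem-modsr} and in Theorems \ref{thm-mainver}, \ref{thm-maintot}, \ref{thm-maintots} — so the proof is a matter of citing the right results in the right order. The one point deserving a word of care is the passage from ``solving the functional interpretation of $\ZL_{\comp{}{},\oplus,\prec}$ in the sense of (\ref{eqn-ndzorn})/(\ref{eqn-goal})'' back to ``$\HAw\vdash\forall y\dt{(\LEX_\lhd)^N}{t}{y}$'': this uses the standard facts recalled in Section \ref{sec-dial-zorn} that the Kuroda negative translation together with a few intuitionistic laws and Markov's principle (all interpretable) reduce $(\LEX_\lhd)^N$ to (\ref{eqn-nzorn}), and that the functional interpretation of (\ref{eqn-nzorn}) is exactly (\ref{eqn-ndzorn}); here one should note that the restriction to quantifier-free $Q$ is what makes this clean, and flag (as the paper already does) that this is the generality in which the result is stated. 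Beyond that, the proof is a short synthesis, and I would write it as a two-paragraph argument: one paragraph invoking Theorem \ref{thm-mainver} and the Section \ref{sec-dial-zorn} reduction for the syntactic claim, and one paragraph invoking Corollary \ref{cor-modlex} and Lemma \ref{lem-modsr} for the model claim.
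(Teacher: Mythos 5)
Your proposal is correct and follows essentially the same route as the paper, whose proof is exactly this two-part citation: Theorem \ref{thm-mainver} (via the Section \ref{sec-dial-zorn} reduction) for the syntactic claim, and Corollary \ref{cor-modlex} together with Lemma \ref{lem-modsr} for the model claim. Your additional remarks on the quantifier-free restriction on $Q$ and the passage through the Kuroda translation are accurate elaborations of what the paper leaves implicit.
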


\begin{proof}
The first claim follows directly from Theorem \ref{thm-mainver}, and the second from Corollary \ref{cor-modlex} and Lemma \ref{lem-modsr}.
\end{proof}

%%%%%%%%%%%%%%%%%%%%%%%%%%%%%%%%%%%%%%%%%%%%%%%%%
%%%%%%%%%%%%%%%%%%%%%%%%%%%%%%%%%%%%%%%%%%%%%%%%%
\section{Conclusion and open questions}
%%%%%%%%%%%%%%%%%%%%%%%%%%%%%%%%%%%%%%%%%%%%%%%%%
%%%%%%%%%%%%%%%%%%%%%%%%%%%%%%%%%%%%%%%%%%%%%%%%%
\label{sec-conc}

In this paper, we explored various notions of recursion over chain bounded partial orders, and gave a general theorem on solving the functional interpretation of an axiomatic, parametrised form of Zorn's lemma.

We intend this work to be taken as the starting point for a number of much broader research questions in both proof theory and computability theory, which we hope to pursue in the future. These include the following:
\begin{enumerate}

\item Can particular instances of $\Phi$ and $\Psi$ as in Section \ref{sec-rec} be connected to known forms of strong recursion, particularly variants of bar recursion? We conjecture, for example, that $\Omega_e$ and $\Gamma_e$ as given in Section \ref{sec-lex} are definable using Spector's variant of bar recursion, using ideas from \cite{Powell(2014.0)}. Are more general results along the lines of \cite{BergOli(2006.0),EscOli(2015.0),OliPow(2012.2),Powell(2014.0)} possible?

\item The relationship between our simple and controlled recursors has many parallels to that between modified bar recursion and Spector's variant. It was shown in \cite{BergOli(2006.0)} that the former in fact defines the latter over System T. Under certain conditions, can we show that our simple recursor actually defines the controlled variant? It was also shown in \cite{BergOli(2006.0)} that Spector's bar recursion is S1-S9 computable in $\modcont$, but modified bar recursion is not. Does an analogous result hold in our setting?

\item Can we formulate Theorems \ref{thm-simpletot} and \ref{thm-totcont} so that they apply to \emph{non-continuous} models, such as the majorizable functionals \cite{Howard(1973.0)}?

\item What other applications of our abstract computational interpretation of Zorn's lemma are possible? Are there cases where a sensible choice of the parameters could lead to a more concise formalisation of a well-known proof, and consequently a more natural and efficient extracted program? In the other direction, can our framework be applied to give a computational interpretation to instances of Zorn's lemma stronger than even countable dependent choice?

\item If we were modify our formulation of Zorn's lemma so that chain boundedness is given as part of the \emph{syntactic} definition, rather than being implicitly dealt with in some model, how would we then solve its functional interpretation?

\end{enumerate}

\noindent\textbf{Acknowledgements.} The author is grateful to the anonymous referees for their valuable corrections and suggestions.

\end{document}